\newtheorem{definition}{Definition}
\newtheorem{theorem}{Theorem}
\newtheorem{lemma}{Lemma}
\newtheorem{corollary}{Corollary}
\newcommand{\ket}[1]{\left|  #1 \right \rangle}
\newcommand{\ketbra}[2]{\left | #1 \right\rangle \left \langle  #2 \right |}
\newcommand{\set}[1]{\left\{ #1 \right\}}
\newcommand{\ProjP}[1]{\widetilde{\mathcal{P}}_{#1}}
\begin{document}

\graphicspath{{./figures/}}

\title{A Contextual $\psi$-Epistemic Model of the $n$-Qubit Stabilizer Formalism}
\date{\today}
\author{Piers Lillystone}
\affiliation{Institute for Quantum Computing and Department of Physics and Astronomy, University of Waterloo, Waterloo, Ontario N2L 3G1, Canada}
\author{Joseph Emerson}
\affiliation{Institute for Quantum Computing and Department of Applied Mathematics, University of Waterloo, Waterloo, Ontario N2L 3G1, Canada}
\affiliation{Canadian Institute for Advanced Research, Toronto, Ontario M5G 1Z8, Canada}

\begin{abstract}
Contextuality, a generalization of non-locality, has been proposed as \emph{the} resource that provides the computational speed-up for quantum computation. For universal quantum computation using qudits, of odd-prime dimension, contextuality has been shown to be a necessary and possibly sufficient resource ~\cite{Howard2014}. However the role of contextuality in quantum computation with qubits remains open. The $n$-qubit stabilizer formalism, which by itself cannot provide a quantum computer super-polynomial computational advantage over a classical counterpart ~\cite{Gottesman1998}, is contextual. Therefore contextuality cannot be identified as a sufficient resource for quantum computation. However it can be identified as a necessary resource ~\cite{Raussendorf2017}. In this paper we construct a contextual $\psi$-epistemic ontological model of the $n$-qubit stabilizer formalism, to investigate the contextuality present in the formalism. We demonstrate it is possible for such a model to be outcome deterministic, and therefore have a value assignment to all Pauli observables. In our model the value assignment must be updated after a measurement of any stabilizer observable. Remarkably, this includes the value assignments for observables that commute with the measurement. A stronger manifestation of contextuality than is required by known no-go theorems.
\end{abstract} 
\maketitle

\section{Introduction}

Contextuality is well-established as a fundamental non-classical feature of quantum theory~\cite{Bell1967, Kochen1967, Mermin1990, Peres1991, Mermin1993,
Simmons2017}, including as a special case Bell non-locality. This has naturally lead to it being proposed as a candidate for understanding the source of quantum computational speed-up. Recent work in classifying contextuality as a resource for quantum computation has proved a fruitful research field. Howard \emph{et al.}~\cite{Howard2014} proved that contextuality is a necessary and possibly sufficient resource\footnote{With sufficiency dependent on the conjecture~\cite{Veitch2014} that all $\rho \notin \mathcal{P}_{\text{SIM}}$, where $\mathcal{P}_{\text{SIM}}$ is the set of ancilla states not useful for magic state distillation routines, can be distilled to a magic state.} for universal quantum computation, in the context of magic-state injection schemes, with odd-prime dimension qudits. For qubit quantum computation contextuality has been proven to be a necessary resource~\cite{Delfosse2015,Bermejo-vega2017,Raussendorf2017}. Further it can even quantify the amount of computational advantage in magic-state injection schemes and measurement-based models of quantum computation~\cite{Anders2009,Raussendorf2001,Veitch2012,Raussendorf2013, Veitch2014, Bermejo-vega2017, Okay2017,Abramsky2017,Catani2017}.

However, despite recent progress in identifying contextuality as a resource for quantum computation, it can only ever be considered as a necessary resource. This is because the $n$-qubit stabilizer formalism, a  fundamental subset (which we call a subtheory) of quantum operations that is not universal for quantum computation~\cite{Gottesman1997,Bravyi2005,Gottesman2009}, admits an efficient classical simulation~\cite{Gottesman1998,Nest2010,Aaronson2004,Pashayan2017}, but is contextual under any standard definition of contextuality~\cite{Mermin1990,Peres1991,Mermin1993,Anders2009,Howard2013,Krishna2017,Lillystone2018}. Therefore contextuality cannot be classified as sufficient for universal quantum computation.

This motivates the question: can the concept of contextuality be re-characterized or refined such that a revised notion can be identified as both a necessary and sufficient resource for universal quantum computation? If such a redefinition is possible it would unify the foundational principle that contextuality is the fundamental non-classical property of quantum theory~\cite{Bell1967,Kochen1967,Mermin1993} which justifies the widely held belief that quantum computation is more powerful than classical computation~\cite{Preskill2012}. A natural  next step toward answering this question is to understand exactly how contextuality manifests in the $n$-qubit stabilizer formalism by constructing an explicit contextual ontological model of that subtheory.

Within the landscape of ontological models there are two primary types~\cite{Harrigan2007,Harrigan2010,Leifer2014}: $\psi$-ontic ontological models wherein a quantum state is uniquely specified by the physical state,  and $\psi$-epistemic ontological models wherein quantum states are not fully specified by the underlying physical state of system, i.e. there exist physical states that can be prepared by multiple quantum states. A $\psi$-epistemic ontological model provides for a classical-like model of quantum theory and is appealing because it provides a means to evade the measurement problem while maintaining a classically natural notion of objective reality. For example, the Kochen-Spekker model of a qubit ~\cite{Kochen1967}, the 8-state model of the qubit stabilizer formalism~\cite{Wallman2012c, Blasiak2013}, and Gaussian quantum mechanics~\cite{Bartlett2012} are all $\psi$-epistemic models of subsets of quantum theory. Conversely, in a $\psi$-ontic model the quantum state fully describes the state of a system and hence cannot give a classically natural account of quantum theory. It remains an open question whether a physically plausible $\psi$-epistemic model of quantum theory is possible, with recent work~\cite{Ruebeck2018} demonstrating that all previously known $\psi$-epistemic models of quantum theory cannot reproduce state update.

The Gottesman-Knill theorem can be used to define a $\psi$-ontic ontological model of the $n$-qubit stabilizer subtheory. This is also the case for the computational basis phase model~\cite{Dehaene2003,Nest2010}, which can be used to construct a weak-simulation of the $n$-qubit stabilizer subtheory and beyond~\cite{Bravyi2016}. For the, non-contextual, $n$-qudit stabilizer subtheory a $\psi$-epistemic model is possible~\cite{Spekkens2014,Catani2017}. However, for qubits the only previously known $\psi$-epistemic model was for the single qubit stabilizer formalism, the 8-state model~\cite{Wallman2012c}. Indeed, for a single qubit, all of quantum theory is non-contextual~\cite{Kochen1967}, at least if the state-update rule is not explicitly required in the model.

Hence it has been an interesting open question whether an $\psi$-epistemic ontological model of the full $n$-qubit stabilizer subtheory can be constructed, which is expected because we know that this subtheory admits an efficient classical description. In this paper we answer this open question in the positive by explicitly constructing a contextual $\psi$-epistemic model of the $n$-qubit stabilizer subtheory. We expect that this model will serve as a fundamental insight into the structure of contextuality within quantum theory and help clarify the role of contextuality as a resource for quantum speed-up for which the stabilizer subtheory serves as a stepping stone. 

As we will see, the $\psi$-epistemic ontological model constructed in this paper is outcome deterministic, that is, physical states of the system fully encode the outcome of any stabilizer measurement. Outcome determinism is often seen as a core motivating principle behind the traditional notion of contextuality, often referred to as Bell-Kochen-Specker contextuality~\cite{Bell1967,Kochen1967}. Therefore it of interest that the model presented here satisfies this principle, a feature shared by previous contextual explanations of the Mermin-Peres square~\cite{Kleinmann2011,Larsson2012}. However the model does not satisfy the requirement that the possible static value assignments respect all functional relationships between sets of commuting observables. Rather it only requires that a given value assignment is consistent with the current state and uses measurement update rules to satisfy the functional relationship between outcomes of commuting observables. We also provide a restructuring of the model in the appendix, which has the appealing feature of being an \emph{always}-$\psi$-epistemic model, but is no longer outcome deterministic.

In our model a correct update to the value assignments of the model requires the physical states to encode $n-1$ generators of a stabilizer state. However, we highlight as an open question whether a generic $\psi$-epistemic model of the $n$-qubit stabilizer formalism requires \lq\lq{}almost\rq\rq{} complete knowledge of stabilizer group of a state.

\section{Preliminaries}

\subsection[]{Ontological Models}

To build a model of the stabilizer formalism we use the ontological models formalism. To define an ontological model we begin with the mathematical framework of operational theories. An operational theory provides a framework for predicting the observed outcomes of experimental procedures, which can be decomposed into a sequence of preparations, transformations, and measurements. Hence an operational theory gives a specification of the experimentally observed statistics $\text{Pr}(k|P,T,M)$ ~\cite{Harrigan2010}.

An ontological model provides a description of a set of experimental statistics by supposing that there exists a well defined notion of a real physical state of a system, referred to as the ontic state $\lambda$, which encodes all physically knowable properties of the system. The ontological models formalism then describes preparations, transformations, and measurements in terms of probability densities, stochastic processes, and response functions over the set of ontic states, $\Lambda$, which we call the ontic space:

\begin{tabular}{p{2.65cm}p{12cm}}
\emph{Preparations:} & A preparation $P$ is represented in an ontological model by a probability density $\mu_P$ over the ontic space:
$\mu_P: \Lambda \rightarrow [0,1]$, where $\int_\Lambda \mu_P(\lambda)d\lambda = 1$. \\
\emph{Transformations:} & A transformation $T$ is represented in an ontological model by a stochastic map $\Gamma_T$ between ontic states: $\Gamma_T: \Lambda \times \Lambda \rightarrow [0,1]$, where $\int_\Lambda\Gamma_T(\lambda^\prime, \lambda)d\lambda^\prime =1, \, \forall \lambda \in \Lambda$. \\
\emph{Measurements:} & A measurement of a POVM, $M = \set{E_k}$, is represented in an ontological model by a set of conditional probability distributions over outcomes of the POVM $\set{\xi_M(k|\lambda)| \, E_k \in M}$ , where $\xi_M(k): \Lambda \rightarrow [0,1]$, and $\sum_k \xi_M(k|\lambda) = 1, \, \forall \lambda \in \Lambda$.
\end{tabular}
\\

To reproduce the statistics of an operational theory $\text{Pr}(k|P,T,M)$ the ontological model must satisfy:
\begin{align}
\text{Pr}(k|P,T,M) = \int_{\Lambda^\prime}\int_\Lambda \xi_M(k|\lambda^\prime)\Gamma_T(\lambda^\prime,\lambda)\mu_P(\lambda) d\lambda d\lambda^\prime.
\end{align}

For example in this paper we are interested in the operational theory defined by the $n$-qubit stabilizer formalism. I.e. convex mixtures of; $n$-qubit pure stabilizer state preparations, unitary transformations from the Clifford group, and projective measurements of Pauli observables\footnote{In the fragment notation the qubit stabilizer subtheory is defined as $\mathcal{F}_{stab, 2^n} =<\mathcal{S}(\mathcal{D}(\mathcal{H}_{2^n})), {Cl}_{2^n}, \widetilde{\mathcal{P}}_n >$~\cite{Leifer2014}.}.

\subsubsection[]{$\psi$-Epistemic models}

The ontological models formalism provides a platform for investigating statements about how physical reality may be structured. A central question in any such investigation is whether a quantum state provides a unique description of physical reality, i.e. whether every ontic state $\lambda$ can be associated to a single quantum state $\rho$. Ontological models where this is the case are termed $\psi$-ontic. Conversely, ontological models where the quantum state does not give a full description of physical reality, but represents some lack of knowledge of the true state of affairs, are termed $\psi$-epistemic:

\begin{definition}\label{def:psiontic}
A \textbf{$\psi$-ontic} ontological model, $\mathcal{O}$, is one where all pure quantum states' probability distributions, $\mathcal{P}(\mathcal{O})$, are non-overlapping:
\begin{align*}
\int_\Lambda \mu_{P_\psi} (\lambda) \mu_{Q_\phi} (\lambda) d\lambda = 0, \, \forall P_\psi, Q_\phi \in \mathcal{P}(\mathcal{O}).
\end{align*}
Otherwise it is \textbf{$\psi$-epistemic}.
\end{definition}

For example, in the ontological model defined by the Gottesman-Knill theorem every possible tableau of generators is identified as an ontic state. Hence each ontic state corresponds to exactly one stabilizer state, making the model $\psi$-ontic. Note, a stabilizer state can be represented by $2^{n(n-1)/2}\prod_{k=0}^{n-1} (2^{n-k}-1)$ tableaus ~\cite{Aaronson2004}, therefore there are as many ontic states associated to a single stabilizer state.

An alternative way to express the criteria for an ontological model to be $\psi$-epistemic is to notice that \cref{def:psiontic} requires that there exists at least one ontic state $\lambda$ that is in the support of more than one quantum state's support:
\begin{definition}\label{def:psiepistemic}
An ontological model is \textbf{$\psi$-epistemic} if there exists an ontic state $\lambda$ that is in the support of more than one pure quantum state, i.e. mathematical we require:
\begin{align*}
\exists \lambda \in \text{supp}(\mu_{P_\psi}) \cap \text{supp}(\mu_{Q_\phi}), \, P_\psi, Q_\phi \in \mathcal{P}(\mathcal{O}),
\end{align*}
where the support is defined as $\text{supp}(\mu_{P_\psi}) = \set{\lambda|\mu_{P_\psi}(\lambda) > 0, \lambda \in \Lambda}$.
\end{definition}

Requiring that there only exists one ontic state in the support of more than one quantum state is a very weak condition. A more natural requirement of an ontological model is that all ontic states are an element of more than one quantum state's support\footnote{Such a requirement firmly places the role of the quantum state at the level of a state of knowledge, i.e. given an ontic state we cannot determine which quantum state that ontic state.}, such a model is called an \text{always-$\psi$-epistemic} model;

\begin{definition}\label{def:alwayspsiepi}
An ontological model is an \textbf{always}-$\psi$-\textbf{epistemic} if the model satisfies; 

$\forall \lambda \in \Lambda$ there exist two operationally distinct pure state preparations $P_\psi, Q_\phi \in \mathcal{P}(\mathcal{O})$ such that $\lambda \in \text{supp}(\mu_{P_\psi}) \cap \text{supp}(\mu_{Q_\phi})$.
\end{definition}

Additionally we can require that all ontic states are in the support of the same number of quantum states, we term ontological models that satisfy this \textbf{symmetric}-always-$\psi$-epistemic models.

One of the interesting features of always-$\psi$-epistemic models is their relation to weak simulations ~\cite{Nest2010}. Ideally in a weak simulation the act of sampling should not give us enough information to calculate the final distribution we wish to sample an outcome from. Similarly in an always-$\psi$-epistemic model having knowledge of which ontic state was prepared is insufficient to determine the quantum state of the system and therefore the full experimental statistics. However both are guaranteed to agree with the experimental statistics over many repeated runs, and never disagree with the experimental certainties.

\subsection[]{Traditional Contextuality}

Traditional non-contextuality, originally proposed by Kochen-Specker and Bell ~\cite{Bell1967, Kochen1967, Mermin1990, Peres1991, Mermin1993}, is defined by assuming it is possible to assign values to sets of commuting observables, such that functional relationships between observables are satisfied. So given an observable $M$ and an ontic state $\lambda$, $\lambda$ deterministically specifies the outcome of a measurement $M$. Mathematically we express this value assignment as $v_\lambda (M) \in \set{E_k}$, where $\set{E_k}$ is the set of eigenvalues of $M$ indexed by $k$.

Within quantum theory if an observable $M$ is simultaneously measurable with some set of mutually commuting observables $\set{A,B,..}$ then any functional relationship between these observables, $f(M,A,B,..)=0$, will also be satisfied by the simultaneous eigenvalues of these observables, so $f(E_k,E_a,E_b,..) =0$. The assumption of traditional non-contextuality then assumes that this functional relationship should also hold for any value assignment, i.e. $f(v_\lambda (M) ,v_\lambda (A) ,v_\lambda (B),...)=0, \forall \lambda \in \Lambda$.

The above assumption implicitly assumes that the value assignments should satisfy the functional relationship for \emph{any} set of commuting observables, i.e. $\set{M,A,B,...}$ and $\set{M,L,M,...}$, even if such sets do not mutually commute. It is this assumption that the model presented here explicitly breaks from. We only require that the value assignment given by an ontic state is consistent with the quantum state and not hypothetical commuting sets of measurements. The functional relationships between commuting observables are then satisfied via a measurement update rule, ensuring the post-measurement value assignment is consistent with the post-measurement state.

\subsection[]{The $n$-Qubit Stabilizer Formalism}

As we are interested in building a contextual ontological model of stabilizer circuits we restrict ourselves to the unitary/PVM $n$-qubit stabilizer subtheory. 
The operational theory we are interested in is therefore $p(k| \, \rho, \, C, M)$, where $\rho \in \mathcal{S}(\mathcal{H}), \, C \in {Cl}_{2^n}, \, M \in \widetilde{\mathcal{P}}_n$.

The most basic element of the stabilizer formalism is projective Pauli group $\widetilde{\mathcal{P}}_n$, which we define as;

\begin{definition}\label{def:ProjP}
The \textbf{projective Pauli group} is the standard Pauli group, $\mathcal{P}_n$, modulo global phases;
\begin{align*}
\widetilde{\mathcal{P}}_n = \set{\bigotimes_i^n P_i | \, P_i \in \set{\mathbb{I},X,Y,Z}} = \mathcal{P}_n / U(1).
\end{align*}
\end{definition}

We say an operator \emph{stabilizes} a state if applying that operator to the state leaves it unchanged, i.e. it is an element of the $+1$ eigenspace of the operator. In the stabilizer formalism we choose the projective Pauli group with $\pm 1$ phases as our set of \emph{stabilizer operators}. Hence the set of stabilizer operators is isomorphic to $ (\mathbb{Z}_2 \times \widetilde{\mathcal{P}}_n) \backslash -\mathbb{I}\cong\mathcal{S}_n $, where we have removed $-\mathbb{I}$ as it stabilizes no state.
Note the full Pauli group, $\mathcal{P}_n$, contains operators that square to negative identity and therefore do not stabilize any state, i.e. non-Hermitian Pauli operators such as $iX\in \mathcal{P}_n$.

For the purposes of the model we are interested in pure stabilizer states. Mixed stabilizer states in the model are considered to be convex combinations of pure stabilizer states. The set of pure stabilizer states is defined as the set of pure states that are in the joint $+1$ eigenspace of $n$ mutually commuting stabilizer operators, i.e. $\mathcal{S}(\psi) = \set{S| \, S\ket{\psi} =\ket{\psi}, \, S \in \mathcal{S}_n}$ where $[S,S^\prime]=0, \, \forall S,S^\prime \in \mathcal{S}(\psi)$. From this it can be shown that the \emph{stabilizer group} of $\ket{\psi}$, $\mathcal{S}(\psi)$, is a maximal abelian subgroup of the Pauli group ~\cite{Gottesman2009}, with $n$ generators. Further the stabilizer group uniquely specifies the stabilizer state. To see this consider that if we define $\mathcal{S}(\psi) = \left<g_1,g_2,...,g_n\right>, \, g_i \in \mathcal{S}_n$ then;
\begin{align}\label{eqn:StabGroupRep}
\rho_\psi = \prod_{i=1}^n \frac{\mathbb{I} + g_i}{2} = \frac{1}{2^n}\sum_{S\in \mathcal{S}(\psi)} S,
\end{align} 
i.e. we repeatedly project onto each generator's $+1$-eigen-subspace until we reach a $1$-dimensional subspace.

The unitaries in the stabilizer formalism are unitaries from the Clifford group, $Cl_{2^n}$, which is defined as the normalizer of the Pauli group;
\begin{align}\label{eqn:CliffGroup}
Cl_{2^n} = \set{U \in U(2^n) |\, U S U^\dagger \in \mathcal{S}_n, \, \forall S \in \mathcal{S}_n}.
\end{align}
Therefore Clifford gates, such as the Hadamard, phase, and CNOT gates, strictly map stabilizer states to stabilizer states.

As previous stated the observables measurable in the $n$-qubit stabilizer subtheory are all projective Pauli operators, where other stabilizer operators can be measured via a classical post-processing bit flip. Therefore the operational statistics any ontological model of the $n$-qubit stabilizer formalism must reproduce can expressed as;

\begin{lemma}\label{lemma:StabStats}
Given a stabilizer state $\ket{\psi} \in \mathcal{S}(\mathcal{H})$ and a Pauli observable $M \in \widetilde{\mathcal{P}}_n$ the expectation value of the observable is given by;
\begin{align}\label{eqn:StabStats}
{\left< M\right>}_\psi=\text{Tr}(M\rho_\psi) = \left\{ \begin{array}{rcc} 1: & M\in\mathcal{S}(\psi)& (C1), \\
 0: & \pm M\notin\mathcal{S}(\psi) & (C2),\\
 -1: & -M\in\mathcal{S}(\psi) & (C3),
  \end{array} \right.
\end{align}
Where $\mathcal{S}(\psi)$ denotes the stabilizer group of $\ket{\psi}$.
\end{lemma}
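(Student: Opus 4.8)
The plan is to compute the expectation value directly from the density-matrix representation of a stabilizer state supplied in \cref{eqn:StabGroupRep}, thereby reducing the single trace to a sum of elementary single-Pauli traces. First I would expand, using linearity of the trace,
\begin{align*}
\text{Tr}(M\rho_\psi) = \frac{1}{2^n}\sum_{S\in\mathcal{S}(\psi)} \text{Tr}(MS),
\end{align*}
so that the problem is reduced to evaluating $\text{Tr}(MS)$ for each stabilizer element $S\in\mathcal{S}(\psi)$.

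The key fact I would invoke is the trace-orthogonality of the Pauli operators: for any element of the signed Pauli group $\mathcal{P}_n$, the trace equals $2^n$ if the element is $\mathbb{I}$ and vanishes otherwise. Since $M$ is Hermitian with $M^2=\mathbb{I}$ and each $S$ is likewise (up to sign) an element of $\ProjP{n}$ squaring to the identity, the product $MS$ is proportional to a projective Pauli, and it equals $\pm\mathbb{I}$ precisely when $S=\pm M$. Thus $\text{Tr}(MS)=2^n$ for $S=M$, equals $-2^n$ for $S=-M$, and vanishes for every other $S$. It then remains to observe that $\mathcal{S}(\psi)$ cannot contain both $M$ and $-M$: were that so, their product $-\mathbb{I}$ would lie in $\mathcal{S}(\psi)$, contradicting the fact noted in the preliminaries that $-\mathbb{I}$ stabilizes no state. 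Hence at most one of $M,-M$ contributes, and the three cases follow at once. In case (C1) the unique surviving term $S=M$ gives $\text{Tr}(M\rho_\psi)=2^n/2^n=1$; in case (C3) the term $S=-M$ gives $-1$; and in case (C2) neither $M$ nor $-M$ appears among the summands, so every term vanishes and the expectation value is $0$.

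The only genuinely delicate step is the phase bookkeeping behind the orthogonality relation — one must confirm that $MS\propto\mathbb{I}$ holds exactly when $S=\pm M$, with scalar $+1$ for $S=M$ and $-1$ for $S=-M$, and never for some other $S$ carrying a nontrivial factor of $i$. This is where care is needed, but it resolves cleanly: two projective Paulis multiply to a scalar multiple of the identity only when they coincide, and whenever $S\neq\pm M$ the product $MS$ is a non-identity Pauli (possibly with an $i$ prefactor from anticommutation) whose trace vanishes regardless of phase. Consequently no $i$ factors ever survive in a nonzero-trace term, and everything beyond this point is routine linearity and counting.
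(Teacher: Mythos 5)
Your proposal is correct and follows essentially the same route as the paper's proof: substitute the stabilizer-group expansion of $\rho_\psi$ from equation \ref{eqn:StabGroupRep} and apply trace-orthonormality of the Pauli operators, so that only the term $S=\pm M$ (of which at most one can occur, since $-\mathbb{I}\notin\mathcal{S}(\psi)$) survives. The paper states this in one line; your write-up merely makes explicit the sign bookkeeping and the exclusion of $\set{M,-M}\subset\mathcal{S}(\psi)$, which is a faithful elaboration rather than a different argument.
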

\begin{proof}
Simply substituting in equation \ref{eqn:StabGroupRep} to the LHS and using the trace-orthonormality of the projective Pauli operators, $\text{Tr}(P_i P_j) = 2^n \delta_{i,j}$, retreives the desired result.
\end{proof}

Note here we have used the expectation value for clarity over the corresponding projectors. 
As the expectation is bounded between $-1$ and $+1$, and the observables $M\in \widetilde{\mathcal{P}}_n$ have $\pm1$ eigenvalues, we can see that if $M\in\mathcal{S}(\rho)$ then the outcome must be 1 (C1), similarly -1 for $-M\in\mathcal{S}(\rho)$ (C3).
Condition 2 (C2) implies the outcome of the measurement is uniformly random, over $\set{+1,-1}$, if $\pm M \notin \mathcal{S}(\rho)$.

\subsection[]{The Stabilizer Formalism and Measurement Update Rules}

Central to the model's construction are the ontological measurement update rules. Here we recap how measurement updates are performed in the stabilizer formalism.

As stabilizer measurements are rank $2^{n-1}$ PVMs the update rules for them are relatively straight forward to state. Suppose we measured the stabilizer state $\ket{\psi}$ with Pauli observable $M$ and observed outcome $k\in\{0,1\}$. Then the post measurement state is given by (with appropriate renormalization);
\begin{align}
\rho_{\psi^\prime} &\propto \frac{\mathbb{I} + {(-1)}^k M}{2} \rho_\psi \frac{\mathbb{I} + {(-1)}^k M}{2} ,\nonumber\\
&\propto \sum_{S \in \mathcal{S}(\psi)} S + (-1)^k (MS + SM) + MSM ,\nonumber\\
& \propto  \sum_{S \in \mathcal{S}(\psi) | [S,M] = 0} S + (-1)^k MS ,\nonumber\\
&= \sum_{b \in \mathcal{B}(\mathcal{S}(\psi)) | [b,m] =0} {(-1)}^{\gamma(b)} P_b + {(-1)}^{\gamma(b) + k +\beta(b,m)} P_{b+m},
\end{align}

where we have used the definition $P_a P_b = {(-1)}^{\beta(a,b)}P_{a+b}\, | \, [P_a,P_b]=0$. Expressing the update rule this way allows us to infer that the post-measurement stabilizer group is constructed by removing all elements of $\mathcal{S}(\psi)$ that do not commute with $M$ and replacing them with compositions of $M$ and the commuting group elements. This allows us to even further simplify the the update rule;

Given any stabilizer group of a pure state $\mathcal{S}(\psi)$ and any projective Pauli operator $M$, it is always possible to express the group as $\mathcal{S}(\psi) =\left<G,h\right>$ where $G$ is a proper subgroup of $\mathcal{S}(\psi)$ such that $[G_i , M] = 0, \, \forall G_i \in G$ and $\pm M \notin G$. Therefore the post-measurement group, after a measurement of $M$ with outcome $k$, is homomorphic to $\mathcal{S}(\psi^\prime)= \left<G,(-1)^k M\right>$, i.e. either $h \neq (-1)^k M$ and $[h,M] \neq 0$ and therefore we remove $h$ and add $(-1)^k M$ in its place or $h = (-1)^k M$ and the pre-and-post measurement stabilizer groups are the same.

Many of known no-go theorems can be performed in the stabilizer formalism by using sequential measurement procedures. For example, the Mermin-Peres square, a proof of contextuality, and the PBR theorem, which can be implemented via adaptive stabilizer measurements.

\section{A simple model via a global value assignment}

Before building a full ontological model of the $n$-qubit stabilizer formalism, let us construct a model of a simpler subtheory. Namely the prepare-measure-discard $n$-qubit stabilizer subtheory, i.e. we allow preparations of pure stabilizer states and a terminating measurement of a Pauli observable. From the stabilizer statistics, lemma \ref{lemma:StabStats}, the outcome of a measurement, $M$, only requires knowledge of whether $\pm M \in \mathcal{S}(\rho)$, which we can simply encode in a value assignment over all Pauli observables:

\begin{definition}\label{def:ValAss}
The \textbf{value assignment} $\nu$ is a function that assigns $\set{+1,-1}$ to every element of the projective Pauli group $\widetilde{\mathcal{P}}_n$;
\begin{align}
\nu: \widetilde{\mathcal{P}}_n \longrightarrow \set{+1,-1},
\end{align}
such that $\nu(\mathbb{I}) = +1$.
\end{definition}

However for the majority of the this paper it will be more useful to use a binary representation of the value assignment, that is we map $+1 \rightarrow 0$ and $-1 \rightarrow 1$, which we term the \emph{phase function}\footnote{Similar to the value assignments used in Wigner functions ~\cite{Delfosse2015,Bermejo-vega2017,Raussendorf2017}};

\begin{definition}\label{def:PhaseFn}
The \textbf{phase function} $\gamma$ is a function that assigns $\set{0,1}$ to every element of projective Pauli group $\widetilde{\mathcal{P}}_n$;
\begin{align}
\gamma: \widetilde{\mathcal{P}}_n \longrightarrow \mathbb{Z}_2,
\end{align}
such that $\gamma(\mathbb{I}) = 0$.
\end{definition}
To move between the two representations we can use $\nu(M) = (-1)^{\gamma(M)}$ and $\gamma(M) = \frac{1}{2}(1 - \nu(M))$.
We denote the set of $n$-qubit phase functions as $\gamma_n = \set{\gamma| \, \gamma: \widetilde{\mathcal{P}}_n \longrightarrow \mathbb{Z}_2}$. 
 
This definition naturally introduces the notion of consistency between stabilizer states and assignments $\gamma$;

\begin{definition}\label{def:Consistent}
A phase function  $\gamma$ is \textbf{consistent} with a stabilizer state $\rho\in\mathcal{S}(\mathcal{H})$, denoted $\gamma \cong \rho$, with stabilizer group $\mathcal{S}(\rho) = \set{(-1)^{p_b} P_b | p_b \in \mathbb{Z}_2, \, P_b \in \widetilde{\mathcal{P}}_n}$ iff;
\begin{align*}
\gamma(b) = p_b, \, \forall b \in \mathcal{B}(\mathcal{S}(\rho)),
\end{align*}
where $\mathcal{B}$ denotes the binary sympletic representation of $\mathcal{S}(\rho)$.
\end{definition}

For the rest of this paper we will let lower case letters imply some index over the projective Pauli group such that they form a group homomorphism, for example the binary-sympletic representation.

Note in definition \ref{def:Consistent} the phase function retains consistency regardless of its value on points not in the stabilizer group. 
From this we can reproduce the stabilizer statistics by defining an epistemic state to be a uniform distribution over consistent phase functions and measurements to read-out the value of the phase function. Therefore the outcome of a measurement will be random if $\pm M \notin \mathcal{S}(\rho)$ and equal to the phase on $M$ if $\pm M \in \mathcal{S}(\rho)$. 

While the scope of the model is a highly restricted, what we can learn is that the phase function is sufficient to output the outcomes of measurements in the prepare and measure setting. The subsequently presented $\psi$-epistemic model of the $n$-qubit stabilizer formalism leverages this by effectively encoding the outcome of a measurement in the phase function. Then uses additional ontology to ensure that the post-measurement phase function is consistent with the post-measurement state, with randomization where necessary. Hence reproducing the quantum statistics.

\section{A contextual $\psi$-Epistemic Model of the $2$-Qubit Stabilizer Formalism}

Before building the full $n$-qubit model it will be insightful to present the $2$-qubit case, which has a slightly less abstract construction. Clearly to reproduce the outcome statistics of any prepare-transform-measure stabilizer circuit it is sufficient to only use the phase function as described earlier. However to handle sequential measurements the update rules requires additional ontology. 

The additional ontology we require is non-trivial strict subgroups of the stabilizer group of the state. Which for two qubits is just the group containing one projective Pauli operator, $G = \set{\mathbb{I},P| P\in\ProjP{2}}$. We are only interested in projective Pauli operators as the phase function contains the information on whether $P$ or $-P$ is an element of the stabilizer group.

This means the ontology of the $2$-qubit model can be expressed as $\Lambda = \left(\ProjP{2} \backslash \mathbb{I} \right)\times \gamma_2$. The uniform distributions, $\mu_\psi$, representing a pure stabilizer state $\psi \in \mathcal{S}(\mathcal{H}_{2^2})$ have support;
\begin{align}
\text{supp}(\mu_\psi ) = \set{\lambda = (P,\gamma)| P\in \widetilde{\mathcal{S}}(\psi)\backslash \mathbb{I}, \gamma \cong \psi},
\end{align}
where the uniformity of the distribution ensures a random outcome for observables not in the stabilizer group. The model is $\psi$-epistemic as some non-orthogonal states have mutual support. For example, the computation basis state $\ket{00}$ has support on all $\lambda = (P,\gamma)$ such that $\gamma(\mathbb{I}Z)=\gamma(Z\mathbb{I})=\gamma(ZZ) = 0$ and $P \in \set{\mathbb{I}Z,Z\mathbb{I},ZZ}$. Therefore the all-zero state shares ontic states with a state $\ket{\psi}$ if $\exists P^\prime \in \mathcal{S}(\psi)$ such that $P^\prime \in \set{\mathbb{I}Z,Z\mathbb{I},ZZ}$, and there exist phase functions consistent with both $\ket{00}$ and $\ket{\psi}$, i.e. $\ket{0+}$.

A Clifford operation's map on the ontology can be derived by considering it's map on the stabilizer operators, it's \emph{stabilizer relations}. The stabilizer relations for a given Clifford operation $C$ can be expressed as $C P_a C^\dagger = {(-1)}^{\gamma_c(a)}P_{c(a)}$ where $P_a, P_{c(a)} \in \ProjP{n}$. Using this expression we represent a Clifford operation in the model by a permutation of the ontic states;
\begin{align}
\Gamma_C: (P_a, \gamma) \longrightarrow (P_{c(a)}, \gamma^R + \gamma_c),
\end{align}
where $\gamma_c$ is the vector of the phases $\gamma_c(a)$ and $\gamma$ is reordered to match the Clifford permutation, therefore $(\gamma^R + \gamma_c)(a) = \gamma(c^{-1}(a)) + \gamma_c (a)$.

A measurement $M$'s response functions \lq\lq{}read-out\rq\rq{} the value of the phase function;
\begin{align}
\xi_{k, M} (\lambda) = \begin{cases}
1 & \text{if} \, \, \gamma(M) = k, \\
0 & \text{otherwise.}
\end{cases}.
\end{align}
Therefore the model is outcome deterministic and the construction clearly satisfies the stabilizer outcome statistics for single shot experiments, as we effectively have reproduced the prepare-measure model given in the previous section.

Using the stabilizer measurement update rules described previously we can define the stochastic measurement update map to be $\Gamma_{k|M}$, letting $k = \gamma(M)$;
\begin{align}\label{eqn:2qUpdate}
\Gamma_{k|M}: (P,\gamma) \mapsto (P^\prime, \gamma^\prime)
\begin{cases}
P\mapsto P^\prime = M & \\
\text{if} \, [M,P] \neq 0:\left.\begin{array}{l}\gamma^\prime(S) = \gamma(S) \\ \gamma^\prime(SM) = \gamma(S) + k + \beta(S,M) \end{array}\right\} & \forall S\in \ProjP{2}, \, | \,  [S,M]=[S,P] = 0 \\
\text{if} \, [M,P]=0: \, \,\gamma(PM) = \gamma(P) + k + \beta(P,M)  \\ 
\gamma^\prime(S) = \set{0,1} \, \text{w. eq. prob} & \forall S\in \ProjP{2}, \, | \, [S,M] \neq 0, \, [S,P] =0 \\
\gamma^\prime(S) = \gamma(S) & \text{Otherwise}
\end{cases}
\end{align}

\begin{lemma}\label{lemma:2qModel}
The ontological model presented above is a $\psi$-epistemic model of the $2$-qubit stabilizer subtheory.
\end{lemma}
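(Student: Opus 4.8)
The plan is to establish two independent claims: first that the triple $(\mu_\psi,\Gamma_C,\xi_{k,M})$ together with the update map $\Gamma_{k|M}$ reproduces the operational statistics $p(k|\rho,C,M)$ of the $2$-qubit stabilizer subtheory — including \emph{sequential} Pauli measurements — and second that the model satisfies \cref{def:psiepistemic}. The second claim is immediate: the explicit overlap of $\ket{00}$ and $\ket{0+}$ exhibited just above the lemma produces an ontic state $\lambda=(P,\gamma)$ lying in $\text{supp}(\mu_{00})\cap\text{supp}(\mu_{0+})$, so I would simply record that such a shared $\lambda$ exists and move on. The substance of the proof is the first claim.

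For single-shot statistics I would reduce to the prepare-measure model of the previous section: since $\xi_{k,M}$ reads out $\gamma(M)$ and $\mu_\psi$ is uniform over phase functions consistent with $\psi$, \cref{def:Consistent} fixes $\gamma(M)$ whenever $\pm M\in\mathcal{S}(\psi)$ (giving outcomes $+1$ or $-1$ deterministically, cases C1/C3 of \cref{lemma:StabStats}) and leaves $\gamma(M)$ unconstrained, hence uniform over $\mathbb{Z}_2$, when $\pm M\notin\mathcal{S}(\psi)$ (giving the uniformly random outcome, case C2). For Clifford gates I would check that the permutation $\Gamma_C$ carries $\text{supp}(\mu_\psi)$ bijectively onto $\text{supp}(\mu_{C\psi})$: the stabilizer relation $CP_aC^\dagger=(-1)^{\gamma_c(a)}P_{c(a)}$ guarantees $P\in\widetilde{\mathcal{S}}(\psi)\Leftrightarrow P_{c(a)}\in\widetilde{\mathcal{S}}(C\psi)$, and a short calculation shows $\gamma\cong\psi \Leftrightarrow \gamma^R+\gamma_c\cong C\psi$, so uniformity is preserved.

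The crux is the update map $\Gamma_{k|M}$, and here I would proceed by the case split built into \cref{eqn:2qUpdate}, verifying in each branch that applying $\Gamma_{k|M}$ to the uniform ensemble $\mu_\psi$ (with outcome $k=\gamma(M)$ read off) yields a distribution whose future statistics match those of the post-measurement state $\psi'$. Using the simplified update rule derived earlier — $\mathcal{S}(\psi')=\langle G,(-1)^kM\rangle$ with $G$ the subgroup of $\mathcal{S}(\psi)$ commuting with $M$ — I would confirm three things. (i) Setting $P'=M$ is legitimate, since $(-1)^kM\in\mathcal{S}(\psi')$ and, because $\mathcal{S}(\psi')$ is abelian, every subsequent commuting subgroup automatically commutes with the stored $P'=M$; this is exactly where the single ($n-1=1$) stored generator of the two-qubit model is used. (ii) On the post-measurement stabilizer group the rules $\gamma'(S)=\gamma(S)$ and $\gamma'(SM)=\gamma(S)+k+\beta(S,M)$ reproduce precisely the phases dictated by \cref{def:Consistent} for $\psi'$, via $P_aP_b=(-1)^{\beta(a,b)}P_{a+b}$ for commuting operators. (iii) The explicit re-randomization $\gamma'(S)\in\set{0,1}$ for $S$ anticommuting with $M$ restores uniformity over exactly those phase functions left unconstrained by $\psi'$, so the output equals $\mu_{\psi'}$ (restricted to $P'=M$) and reproduces all further statistics.

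The main obstacle is part (iii) together with the bookkeeping that the branches of \cref{eqn:2qUpdate} are exhaustive, mutually consistent, and correctly normalized: I must show that averaging the stochastic update over the starting ensemble — in which the stored $P$ may either commute or anticommute with $M$ — yields a single well-defined uniform distribution over phase functions consistent with $\psi'$, with no phase on a post-measurement stabilizer element accidentally randomized and no genuinely free phase accidentally fixed. Checking that the $[M,P]=0$ and $[M,P]\neq 0$ branches both land on the same $\mu_{\psi'}$, despite treating the stored $P$ asymmetrically, is the delicate step; once that is verified, the composition of preparation, Cliffords, measurement read-out, and update reproduces every sequential stabilizer circuit, completing the proof.
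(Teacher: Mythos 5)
Your overall strategy mirrors the paper's own proof (given in its appendix): single-shot statistics are reduced to the prepare--measure model, $\psi$-epistemicity is discharged by the explicit $\ket{00}$/$\ket{0+}$ overlap, and correctness of sequential measurements is proved by a case split on the commutation relation between the measured $M$ and the stored $P$, using the simplified rule $\mathcal{S}(\psi^\prime)=\left<G,(-1)^kM\right>$. Your additional check that $\Gamma_C$ maps $\text{supp}(\mu_\psi)$ onto $\text{supp}(\mu_{C\psi})$ is a point the paper glosses over, and is welcome. However, the target you set yourself in step (iii) --- that the updated ensemble ``equals $\mu_{\psi^\prime}$ (restricted to $P^\prime=M$)'', i.e.\ a uniform distribution over all phase functions consistent with $\psi^\prime$ --- is false, so the ``delicate step'' you flag cannot be completed as stated. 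The reason is that the deterministic reassignment $\gamma^\prime(SM)=\gamma(S)+k+\beta(S,M)$ in \cref{eqn:2qUpdate} is applied to \emph{all three} mutually anticommuting extensions $S$ that commute with both $P$ and $M$, while only one of them lies in $\mathcal{S}(\psi^\prime)$. For the other two, the output phase function carries perfect correlations between $\gamma^\prime(S)$ and $\gamma^\prime(SM)$, whereas under $\mu_{\psi^\prime}$ these two values are independent uniform bits. Concretely: take $\psi=\ket{00}$, $\lambda=(Z\mathbb{I},\gamma)$, measure $M=YY$ with outcome $k=0$; the update forces $\gamma^\prime(Y\mathbb{I})=\gamma^\prime(\mathbb{I}Y)$ in every branch of the randomization, but the uniform distribution $\mu_{B_{11}}$ over phase functions consistent with the post-measurement Bell state assigns these two non-stabilizer elements independent values. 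So the post-measurement ensemble is a strict sub-distribution of $\mu_{\psi^\prime}$, not $\mu_{\psi^\prime}$ itself.

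The repair --- and this is what the paper actually proves --- is to aim at a weaker induction invariant: (a) every phase function output by $\Gamma_{k|M}$ is consistent with $\psi^\prime$ in the sense of \cref{def:Consistent}, and (b) the phase assigned to each Pauli operator with $\pm P_{b^\prime}\notin\mathcal{S}(\psi^\prime)$ is \emph{marginally} uniform. This suffices to reproduce all sequential statistics, because each response function reads out only a single phase value (so only marginals matter, per \cref{lemma:StabStats}), and the residual correlations are ``group-consistent'': they impose exactly the phase relation $\gamma(SM)=\gamma(S)+k+\beta(S,M)$ that a later measurement promoting $S$ into the stabilizer group would require anyway, so they can never contradict a quantum prediction. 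If you rewrite your step (iii) to establish (a) and (b) rather than exact equality with $\mu_{\psi^\prime}$, your case analysis goes through and coincides with the paper's argument; as written, the proposal would stall at precisely the step you identified as the crux.
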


\begin{proof}

The proof of correctness is subsumed by the general $n$-qubit proof, theorem \ref{theorem:Main}. The direct proof of this lemma has been included in the appendix. We have included the proof as it gives an intuitive understanding of the structure of update rules in the model.

\end{proof}

We have also provided a mathematica code base\footnote{Accessable on \href{https://github.com/PiersLillystone/FiniteStateMachine2QStabSubtheory}{github}, date last accessed April 7th 2019.} to simulate the $2$-qubit model.

\subsection[]{Features of the $2$-qubit Model}

\subsubsection[]{The Mermin-Peres Square}

The Mermin-Peres (MP) square~\cite{Mermin1990,Peres1991,Howard2013} is a well known proof of state-independent traditional contextuality. The proof only requires stabilizer measurement, therefore the $n$-qubit stabilizer formalism, for $n>1$, is traditionally contextual. There are several variations of the MP square for $2$-qubits. Here we will focus on demonstrating how the model reproduces the quantum statistics for the most common form of the square, figure \ref{table:MPsq}.

\begin{figure}[h!]
\centering
\begin{tabular}{m{0.75cm} m{0.5cm} m{0.75cm} m{0.5cm} m{0.75cm} m{0.5cm} m{1cm}}
$X\mathbb{I}$ & --- & $\mathbb{I}X$ & --- & $XX$  & $\rightarrow$ & $\mathbb{I}$ \\
$\,\,|$                 &     & $\,\,|$                 &      & $\,\,|$ &                    &                     \\
$\mathbb{I}Z$ & --- & $Z\mathbb{I}$ & --- & $ZZ$  & $\rightarrow$ & $\mathbb{I}$ \\
$\,\,|$                 &     & $\,\,|$                 &      & $\,\,|$ &                    &                     \\
$XZ$              & --- & $ZX$               & --- & $YY$  & $\rightarrow$ & $\mathbb{I}$ \\
$\,\,\downarrow$ &     & $\,\,\downarrow$   &      & $\,\,\downarrow$ &     &                     \\
$\,\,\mathbb{I}$ &     & $\,\,\mathbb{I}$   &      & $-\mathbb{I}$ &     &                     \\
\end{tabular}
\caption{A Mermin-Peres square for $2$-qubits, with tensor notation suppressed. Each row and column constitutes a set of commuting observables, who's product is given at the end of the arrows.}
\label{table:MPsq}
\end{figure}

Under the assumptions of traditional non-contextuality there is no value assignment that satisfies the functional relationships between all commuting observables in the Mermin-Peres square. i.e. $\nexists v_\lambda \in \set{+1,-1}^{|\ProjP{2}|}$ such that;
\begin{align*}
v_\lambda(X\mathbb{I})v_\lambda(\mathbb{I}X)v_\lambda(XX) &= +1, \\
v_\lambda(Z\mathbb{I})v_\lambda(\mathbb{I}Z)v_\lambda(ZZ) &= +1, \\
v_\lambda(X\mathbb{I})v_\lambda(\mathbb{I}Z)v_\lambda(XZ) &= +1, \\
v_\lambda(Z\mathbb{I})v_\lambda(\mathbb{I}X)v_\lambda(ZX) &= +1, \\
v_\lambda(XZ)v_\lambda(ZX)v_\lambda(YY) &= +1, \\
v_\lambda(XX)v_\lambda(ZZ)v_\lambda(YY) &= -1.
\end{align*}

The model here evades such a contradiction by implicitly demanding the value assignment is updated after any measurement. Further, we'd argue that a measurement update rule that updates value assignments is physically well justified: It is perfectly reasonable to demand that a measurement that extracts information from a system disturbs the state of the system, regardless of whether the measurement was performed with other commuting measurements. By inspecting the measurement update rules we can also see that we can only ever learn $n$-bits of information about the value assignment, due to the randomization step, satisfying the Holevo bound~\cite{Holevo1973}, similar to previously constructed contextual models of the Mermin-Peres square by Larsson and Kleinmann \emph{et al.}~\cite{Larsson2012,Kleinmann2011}.

To demonstrate how the model correctly reproduces the statistics of the Mermin-Peres square let us consider a test case of the $\ket{00}$ state, to be measured in one of the contexts. The model gives a deterministic outcome assignment to every Pauli observable, i.e. $v_\lambda(M) = (-1)^{\gamma_\lambda(M)}$. However, as we must update the ontic state after every measurement we can only effectively learn the values assigned to one context. Suppose the ontic state of the system is $\lambda = (Z\mathbb{I}, \vec{0}) \in \text{supp}(\mu_{00})$. We therefore have a value assignment to the observables in the Mermin-Peres square which can be expressed as;
\begin{align*}
\begin{array}{ccc}
 v_\lambda(M)=k & & v(M|\, \forall \lambda \in \text{supp}(\mu_{00})) \\
\begin{array}{ccc}
+ & + & + \\
+ & + & + \\
+ & + & +
\end{array}
&
\longleftrightarrow
&
\begin{array}{ccc}
\color{red} \{\pm\} & \color{red} \{\pm\}  & \color{red} \{\pm\}  \\
+ & + & + \\
\color{red} \{\pm\}  & \color{red} \{\pm\}  & \color{red} \{\pm\} 
\end{array}
\end{array}
\end{align*}

Where the second Mermin-Peres square has been used to demonstrate that cells not in the stabilizer group have uniformly random values. In the left-hand Mermin-Peres square the last column\rq{}s constraint is not satisfied.  So let us measure this problem context, starting with $YY$. To output an outcome of a measurement we simply read out the value of the phase function for that observable. So in this case $k_{YY} = +$. To perform the update rule we note that $[Z\mathbb{I},YY]\neq 0$ so we update $Z\mathbb{I} \rightarrow YY$ and follow the non-commuting cases in the update rules. The post-measurement update to the phase function is given by line 2 of equation \ref{eqn:2qUpdate}, so $\gamma^\prime(S) = \gamma(S)$ and  $\gamma^\prime(SM) = \gamma(S) + k +\beta(S,M), \, \forall S \in \ProjP{2}$ such that $[S,YY]=[S,Z\mathbb{I}] =0$. So we need to find the set of $2$ qubit Pauli operators that commute with both $Z\mathbb{I}$ and $YY$. Namely $\set{\mathbb{I}, ZZ, \mathbb{I}Y, ZX}$ which forms a non-abelian subgroup of the Pauli group\footnote{Note that the Mermin-Peres square gives graphical way to find such operators.}. Following the phase function update rules we need to update phase function to: 
\begin{align*}
\gamma^\prime(YY) &= \gamma(YY)= k= 0 ,\\
\gamma^\prime(ZZ)  &= \gamma(ZZ) = 0,\\
\gamma^\prime(XX) &= \gamma(ZZ) + k + \beta(ZZ,YY) = 1,\\
\gamma^\prime(ZX) &= \gamma(ZX) = 0,\\
\gamma^\prime(XZ) &= \gamma(ZX) + k + \beta(ZX,YY) = 0.
\end{align*}
Finally we randomize any elements that commute with $Z \mathbb{I}$ and anti-commute with $YY$, i.e. in this Mermin-Peres sqaure $\mathbb{Z} \mathbb{I}$, $\mathbb{I} Z$, and $\mathbb{I}X$. This means, after converting the phase function, the value assignment for the Mermin-Peres square after a $YY$ measurement is now, chosing the randomization to give $+$;
\begin{align*}
\begin{array}{ccc}
 v(M|\lambda^\prime)=k & & v(M|\, \forall \lambda^\prime \in \text{supp}(\mu_{B_{11}})) \\
\begin{array}{ccc}
+ & \color{ForestGreen}+ & \color{blue}- \\
\color{ForestGreen}+ & \color{ForestGreen}+ & + \\
\color{blue}+ & \color{blue}+ & +
\end{array}
&
\longleftrightarrow
&
\begin{array}{ccc}
\color{red} \{\pm\} & \color{ForestGreen} \{\pm\}  & \color{blue}-  \\
\color{ForestGreen} \{\pm\} & \color{ForestGreen} \{\pm\} & + \\
\color{blue} \{\pm\}  & \color{blue} \{\pm\}  & + 
\end{array}
\end{array}
\end{align*}

Where we have colored the deterministically updated cells blue and the randomized cells green. Note the final row\rq{}s value assignments are now correlated but random, this makes sense when we consider that given $P = Z\mathbb{I}$ the pre-measurement stabilizer group could have been $\set{\mathbb{I},Z\mathbb{I},\mathbb{I}Z,ZZ}$ or $\set{\mathbb{I},Z\mathbb{I},\mathbb{I}X,ZX}$, so the update rule must account for both possibilities. 

The update rule has enforced that the contexts that $YY$ is an element of satisfy the functional relationships, as the new ontic state is in the support of any state stabilized by $YY$. Further, as expected, the value assignments are consistent with the new stabilizer group $S(B_{11})=\set{\mathbb{I},-XX,ZZ,YY}$, where $\ket{B_{11}} = \frac{1}{\sqrt{2}} \left( \ket{00} - \ket{11}\right)$. If we were to go through the update rules for every other measurement in the context (the third column) we would see that the value assignment for the square only performs the randomization step.

\subsubsection[]{The PBR theorem}

One obstacle to constructing any ontological model of the stabilizer formalism is the Pusey-Barrett-Rudolph no-go theorem~\cite{PBR2012}. This no-go result states that any ontological model that treats preparations on subsystems independently of preparations on other subsystems, referred to as \emph{preparation independence}, must be $\psi$-ontic.

As the model presented here is $\psi$-epistemic the PBR theorem implies the model must represent preparations non-locally. This indeed is the case. For example consider the $\ket{00}$ preparation, the stabilizer group of $\ket{00}$ is given by $\mathcal{S}(\ket{00})=\set{\mathbb{I},Z\mathbb{I},\mathbb{I}Z,ZZ}$ and therefore the support of $\mu_{\ket{00}}$ only contains ontic states that satisfy $\gamma(Z\mathbb{I}) + \gamma(\mathbb{I}Z)=\gamma(ZZ)$, however the same cannot be said for other local variables such as $\gamma(XI)+\gamma(IZ) \neq \gamma(XZ)$. Hence the model evades the PBR theorem by being explicitly non-local, even for product preparations.

The anti-distinguishing measurements used in the PBR theorem also place additional constraints on the structure of possible $\psi$-epistemic models, as demonstrated by Karanjai \emph{et al.}.  Namely they imply that any set of states that can be anti-distinguished cannot have a global intersection\footnote{Karanjai \emph{et al.} also proved strict lower bound on the size of the ontic space of any ontological model of the stabilizer formalism, which the model presented here satisfies. This is covered in more detail in the appendix.}~\cite{Karanjai2018}. A quick sketch of the proof is as follows~\cite{Leifer2014}; Suppose we have a set of states $S=\set{\rho_i}_i$ that can be antidistinguished by some POVM, with elements $E_j$. So;
\begin{align*}
\text{Tr}(\rho_i E_j) \begin{cases} = 0, & \text{if} \, i=j, \\ \ge 0, &\text{otherwise.} \end{cases}
\end{align*}
I.e. if we measured outcome $j$ we know with certainty that $\rho_j$ was not prepared. This implies that in any ontological model that reproduces the statistics of an anti-distinguishing measurement $\xi_j(\lambda) = 0, \, \forall \lambda \in \text{supp}(\mu_{\rho_j})=\Delta_{\rho_j}$, as $\text{Tr}(\rho_j E_j) = \int_\Lambda d\lambda \, \xi_j(\lambda) \mu_{\rho_j}(\lambda) = \int_{\Delta_{\rho_j}} d\lambda \, \xi_j(\lambda) \mu_{\rho_j}(\lambda) = 0$. If we assume that $\mu_{\rho_i}$ share a joint support, $\exists \Delta \subset \Delta_{\rho_i}, \, \forall i$, then the same argument holds for all outcomes, i.e. $\xi_i (\lambda)=0, \, \forall \lambda \in \Delta, \,\forall i$. However by normalization we require $\sum_i \xi_i (\lambda)  = 1, \, \forall \lambda$, therefore we have a contradiction. 

While these anti-distinguishing POVMs cannot be measured by a single measurement in the stabilizer formalism, we can measure them via a sequence of adaptive measurements. To demonstrate how the model accounts for these sequences of adaptive measurements we use the anti-distinguishable set of states $\set{\ket{00},\ket{0+},\ket{+0},\ket{++}}$ from the  PBR paper~\cite{PBR2012}. The adaptive sequence of stabilizer measurements that anti-distinguishes this set of states is given in figure \ref{fig:PBR_Adapative_Measurements}.

\begin{figure}[t]
 \centering
    \includegraphics[width=0.5\textwidth]{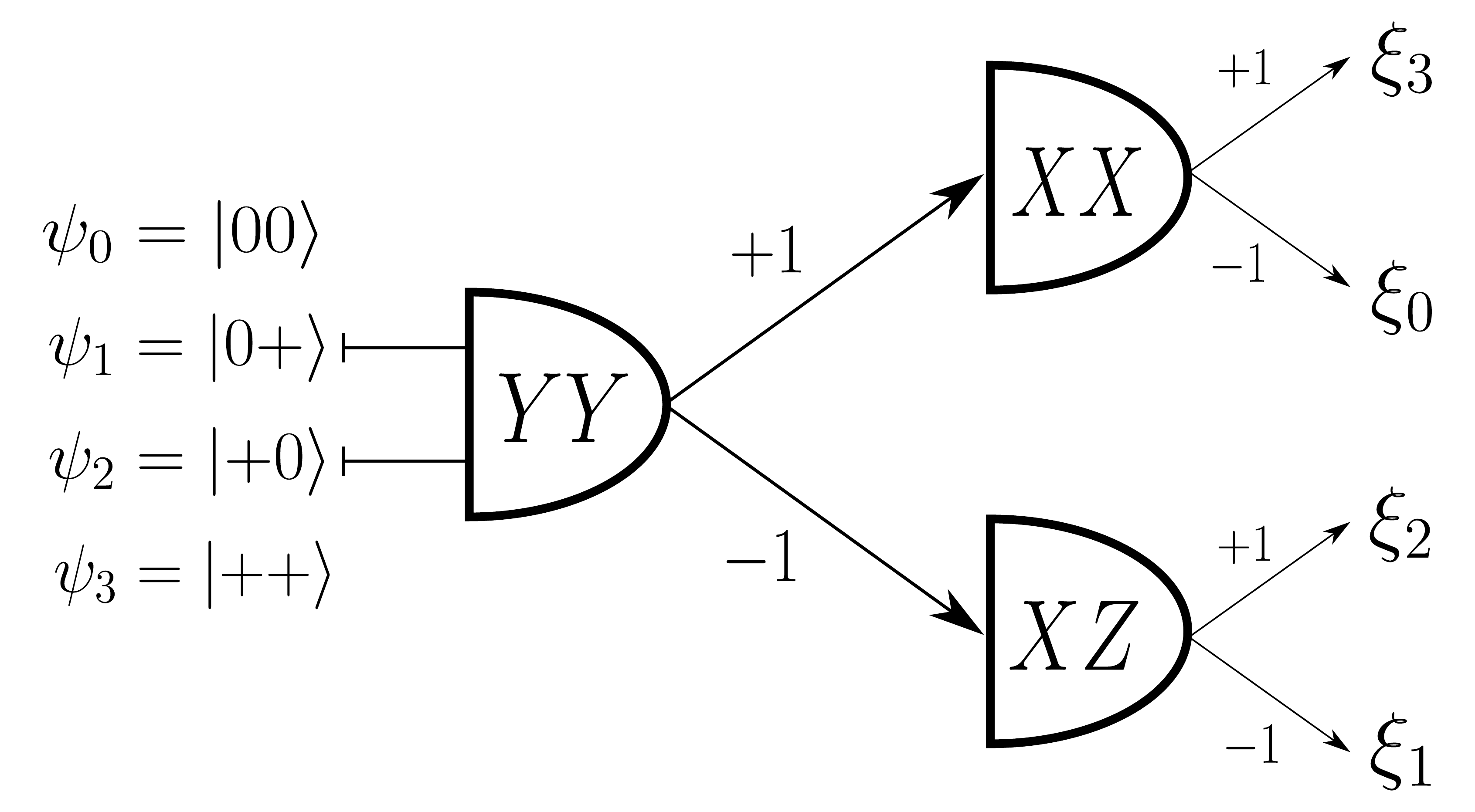}
\caption{The Pusey-Barrett-Rudolph anti-distinguishing adaptive measurement circuit. The input to the circuit is a $\psi_i$, which all give a random outcome for the measurement of $YY$. The second measurement is then conditioned on the outcome of $YY$. The possible sets of outcomes then anti-distinguishes the preparation. For example $\mathcal{O}(YY) = +1$ and $\mathcal{O}(XX) = -1$ implies $\psi_0$ wasn't prepared, hence this path is labelled $\xi_0$.}
\label{fig:PBR_Adapative_Measurements}
\end{figure}

The measurement of the first observable, with outcome $k$, updates the stabilizer groups of the input states as follows;
\begin{alignat}{4}\label{eqn:PBRStabUpdate}
\begin{array}{l} \ket{00} \\ \ket{0+} \\ \ket{+0} \\ \ket{++}  \end{array} & 
\begin{array}{l} \equiv \set{\mathbb{I}, \mathbb{I}Z, Z\mathbb{I}, ZZ} \\ \equiv \set{\mathbb{I}, \mathbb{I}X, Z\mathbb{I}, ZX} \\ \equiv \set{\mathbb{I}, \mathbb{I}Z, X\mathbb{I}, XZ} \\ \equiv \set{\mathbb{I}, \mathbb{I}X, X\mathbb{I}, XX}  \end{array} &
\scalebox{1.5}{$\xrightarrow[]{\mathcal{O}(YY) = k}$}&
\begin{array}{l}  \set{\mathbb{I}, ZZ, (-1)^k YY, (-1)^{k+1}XX}, \\  \set{\mathbb{I}, ZX, (-1)^k YY, (-1)^k XZ}, \\  \set{\mathbb{I}, XZ, (-1)^k YY, (-1)^k ZX}, \\  \set{\mathbb{I}, XX, (-1)^k YY, (-1)^{k+1} ZZ}.  \end{array}
\end{alignat}

By noting that two stabilizer states, $\rho_1$ and $\rho_2$, are orthogonal if and only if $\exists P \in \mathcal{S}(\rho_1)$ and $-P \in \mathcal{S}(\rho_2)$. We can see that the first and last post-measurement stabilizer states are orthogonal if $k=0$ and equivalent if $k=1$. Similarly, the second and third post-measurement groups are orthogonal if $k=1$  and equivalent otherwise. Therefore by adaptively measuring $XX$ or $XZ$, equally $ZZ$ or $ZX$, after the first measurement we can rule out a single preparation, anti-distinguishing the set of input states. 

The first thing to note about how the model reproduces the PBR statistics is that it satisfies the no joint overlap requirement. By investigating the stabilizer groups of the pre-measurement states above it is clear that the $4$ states do not share a mutual generator and therefore have no joint overlap. Mathematically, $\nexists P$ such that $(P, \cdot) \in \text{supp}(\mu_{\psi_i}), \, \forall i$ therefore $\cap_i \text{supp}(\mu_{\psi_i}) = \emptyset$. 

To demonstrate how the measurement update rule gives the correct anti-distinguishing measurement statistics we investigate the evolution of the representations of $\ket{00}$ and $\ket{++}$, a similar analysis can be given for $\ket{0+}$ and $\ket{+0}$. Firstly we let $\gamma(YY) = k$ for both states. Considering that the $\lambda =(P,\cdot)$ in the support of $\mu_{\ket{00}}$ and $\mu_{\ket{++}}$ are $P\in\set{\mathbb{I}Z,Z\mathbb{I},ZZ}$ and $P\in\set{\mathbb{I}X,X\mathbb{I},XX}$, respectively, therefore $\mu_{\ket{00}} \cap \mu_{\ket{++}} = \emptyset$. For completeness, we examine both $[P,YY] = 0$ and $[P,YY] \neq 0$.

Taking our input state to be $\psi = \ket{00}$ and $(P,\cdot) = (\mathbb{I}Z, \cdot)$ the update rules determine that we must update all $S \in \ProjP{2}\backslash \mathbb{I}$ such that $[S,YY]=[S,\mathbb{I}Z]=0$, as $[\mathbb{I}Z,YY]\neq 0$. This set is $\set{\mathbb{I}, ZZ, XZ, Y\mathbb{I}}$, therefore we perform the update;
\begin{align*}
 \gamma^\prime(ZZ) &= \gamma(ZZ) = 0, \\
\gamma^\prime (XX) &= \gamma(ZZ) + k + \beta(ZZ, YY) = k + 1, 
\end{align*}
as required by equation \ref{eqn:PBRStabUpdate}. Note we also update $\gamma^\prime (XZ) = \gamma(XZ)$ and $\gamma^\prime(ZX) = \gamma(XZ) + k + \beta(XZ,YY) = \gamma(XZ)+ k$, which as $\gamma(XZ)$ is random implies $\gamma^\prime(ZX)$ is random. This update is important because the ontic state we have considered is in the support of $\ket{+0}$, therefore if $\gamma(XZ)$ is fixed by consistency it gives the correct update for that state. Applying the above to $(\mathbb{I}X, \cdot) \in \text{supp}(\mu_{\ket{++}})$, where now the set of commuting operators is $\set{\mathbb{I},XX, ZX, Y\mathbb{I}}$, we have;
\begin{align*}
\gamma^\prime(XX) &= \gamma(XX) = 0, \\
\gamma^\prime(ZZ) &= \gamma(XX) + k + \beta(XX,YY) = k +1, 
\end{align*}
as required.

By symmetry we can infer the update rules for $\lambda_1 = (Z\mathbb{I}, \cdot)$ and $\lambda_2 = (X\mathbb{I}, \cdot)$ will follow a similar structure. So finally we must check the commuting cases $[ZZ,YY]=[XX,YY]=0$. By the update rules we know in this case we leave $\gamma(P)$ static and update $\gamma^\prime(PM)$, therefore given $\lambda_1 = (ZZ,\cdot)$ we perform the updates $\gamma^\prime(ZZ) = \gamma(ZZ)$ and $\gamma^\prime(XX) = \gamma(ZZ) + k + \beta(ZZ,YY) = k +1$ and for similarly for $\lambda_2 = (XX,\cdot)$ we have $\gamma^\prime(XX) = \gamma(XX)$ and $\gamma^\prime(ZZ) = \gamma(XX) + k + \beta(XX,YY) = k +1$, as required.

\subsubsection[]{Traditional Contextuality}

In the ontological model presented in this paper the phase function clearly defines a value assignment to all Pauli observables. However in comparison to the Mermin style proofs of contextuality ~\cite{Mermin1990,Peres1991,Mermin1993} we do not assume that the value assignments of all commuting observables can be simultaneously extracted. Further any measurement procedure necessarily disturbs the value assignments of commuting observables. Therefore as expected the model is Kochen-Specker contextual.

However we argue that the contextuality present in the stabilizer formalism is not a surprising form of contextuality. Taking the Mermin-Peres square as our example, there is no reason to believe that the outcome of $YY$ should be the same independent of the context it is measured in. Nor should its outcome be the same independent of where in the sequence of measurements it is performed. The only requirement imposed by the stabilizer formalism is that we satisfy the algebraic constraints, given by the stabilizer group operation, on observables in a stabilizer group. 

To illustrate this let us consider four measurement sequences $S_1 = XX\rightarrow ZZ\rightarrow YY$, $S_2 = XZ\rightarrow ZX \rightarrow YY$, $S_1^\prime = YY \rightarrow ZZ \rightarrow XX$, and $S_2^\prime = YY \rightarrow ZX \rightarrow XZ$. The first thing to note is that the first two measurements in each sequence define a rank-1 projective measurement, which fixes the outcome of the final measurement. For the first sequence we have the constraint $S_1: \, \mathcal{O}(XX)\mathcal{O}(ZZ) \oplus 1 = \mathcal{O}(YY)$, similarly $S_2: \,\mathcal{O}(XZ)\mathcal{O}(ZX) = \mathcal{O}(YY)$. The usual contradiction is derived by adding in contexts that jointly constrain the outcomes of $XX$, $ZZ$, $XZ$, and $ZX$. However this ignores what the algebraic constraints are imposing, i.e. the outcome of a $YY$ measurement is constrained by the outcomes of the previously measured observables. Not on the possible, counter-factual, measurement sequences. 

The final two sequences, $S_1^\prime$ and $S_2^\prime$, illustrate that such an update to the value assignment is physically justified. Consider measuring a context in the Mermin-Peres square with the maximally mixed state as the input. In this case the outcome of a $YY$ measurement should be random, by lemma \ref{lemma:StabStats}. If we measure sequence $S_1^\prime$ and $S_2^\prime$ the outcome $\mathcal{O}(YY)$ is clearly random, by definition of the phase function. Similarly if we measure sequences $S_1$ or $S_2$ the outcome $\mathcal{O}(YY)$ is still random, as the outcomes of the previous measurements in the sequence are random, but not necessarily equal to the outcome if $YY$ was measured first. Therefore there is no reason to expect the outcome of $YY$ to be the same regardless of whether it was measured first or last, as the only constraint we should obey is that it's outcome is random. Hence, the only case where we should expect a fixed outcome for $YY$ is if the input state was an eigenstate of $YY$, in which case the algebraic constraints limit our value assignments to other commuting observables, but crucially no others. 

\section[]{Generalizing the model to $n$-qubits}

To begin we first establish some notation, denote a proper abelian subgroup of the projective Pauli group with bold font, i.e. $\boldsymbol{G} \subset \ProjP{n}$. We define the set of all abelian subgroups with $k$ generators as $\boldsymbol{\mathcal{P}}_k := \set{\boldsymbol{G}| \, \boldsymbol{G} \subset \ProjP{n}, \, |\boldsymbol{G}| = 2^k}$. These subgroups will replace the role of $ \ProjP{2} \backslash \mathbb{I}$ in the two qubit model, and technically for the two qubit model the stabilizer ontology is $\boldsymbol{\mathcal{P}}_1$.

The ontology of the $n$-qubit model is defined to be $\Lambda = \boldsymbol{\mathcal{P}}_{n-1} \times \gamma_n$. The uniform distributions representing $n$-qubit pure stabilizer states $\psi \in \mathcal{S}(\mathcal{H}_{2^n})$ have support;
\begin{align}
\text{supp}(\mu_\psi) = \set{\lambda = (\boldsymbol{G},\gamma)| \, \boldsymbol{G} \subset \widetilde{\mathcal{S}}(\psi), \, \boldsymbol{G} \in \boldsymbol{\mathcal{P}}_{n-1}, \, \gamma \cong \psi }.
\end{align}
These states, while being defined on an exponentially growing ontology, are $\psi$-epistemic. For example the ontic state $\lambda = ({\left<Z_i\right>}_{i = 1,..,n-1},\gamma = \vec{0})$ is in the support of $\ket{0}^{\otimes n}$ and $\ket{0}^{\otimes n-1} \ket{+}$. To find which pure stabilizer states have support on a given ontic state note that given a $\boldsymbol{G} \in \boldsymbol{\mathcal{P}}_{n-1}$ there is a finite set of non-commuting Pauli operators that commute with $\boldsymbol{G}$ that can all be added to the group to construct valid stabilizer groups.

As before the Clifford operations in the model can be inferred from their action of the projective Pauli group. So a Clifford operation's, $C$, ontological representation, $\Gamma_C$, is given by;
\begin{align}
\Gamma_C: (\boldsymbol{G}, \gamma) \mapsto (C\boldsymbol{G}C^\dagger, \gamma^R + \gamma_c),
\end{align}
Where again we re-order $\gamma$ to match the Clifford permutation and the conjugation is take to be on every element of $\boldsymbol{G}$.

As in the two qubit model the response functions read-out the value of the phase function;

\begin{align}
\xi_{k, M} (\lambda) = \begin{cases}
1 & \text{if} \, \, \gamma(M) = k, \\
0 & \text{otherwise.}
\end{cases}.
\end{align}

We define the measurement update rules for a measurement of $M$ with outcome $k$ to be;

\begin{align}\label{eqn:nqUpdate}
\Gamma_{k|M}: (\boldsymbol{G},\gamma) \mapsto (\boldsymbol{G}^\prime, \gamma^\prime)
\begin{cases}
\boldsymbol{G}\mapsto \boldsymbol{G}^\prime = \left<\boldsymbol{G}_M,M\right> \, \text{w. eq. prob} & \begin{array}{l} \boldsymbol{G}_M \in \boldsymbol{\mathcal{P}}_{n-2} \, | \, [M,G_i] = 0, \\ \boldsymbol{G}_M \subset \boldsymbol{G}, \, \forall G_i \in \boldsymbol{G}_M .\end{array}\\
\text{if} \, [M,\boldsymbol{G}] \neq 0:\left.\begin{array}{l}\gamma^\prime(S) = \gamma(S) \\ \gamma^\prime(SM) = \gamma(S) + k + \beta(S,M) \end{array}\right\} & \forall S\in \ProjP{n}, \, | \,  [S,M]=[S,\boldsymbol{G}] = 0. \\
\text{if} \, [M,\boldsymbol{G}]=0: \, \,\gamma^\prime(G_iM) = \gamma(G_i) + k + \beta(G_i,M)  & \forall G_i \in \boldsymbol{G} .\\ 
\gamma^\prime(S) = \set{0,1} \, \text{w. eq. prob} & \begin{array}{l}\forall S\in \ProjP{n},  \, | \, [S,M] \neq 0, \, [S,G_i] =0, \\ \forall G_i \in \boldsymbol{G}.\end{array}\\
\gamma^\prime(S) = \gamma(S) & \text{Otherwise.}
\end{cases}
\end{align}

\begin{theorem}\label{theorem:Main}
The ontological model presented above is a $\psi$-epistemic model of the $n$-qubit stabilizer subtheory.
\end{theorem}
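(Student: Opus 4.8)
The statement has two parts: that the model reproduces the operational statistics $p(k\,|\,\rho,C,M)$ of every (possibly adaptive) prepare--transform--measure protocol in the subtheory, and that it is $\psi$-epistemic. The second part I would dispatch immediately using the overlap already exhibited in the text: the ontic state $\lambda=(\langle Z_i\rangle_{i=1,\dots,n-1},\vec{0})$ lies in $\text{supp}(\mu_\psi)$ for both $\ket{0}^{\otimes n}$ and $\ket{0}^{\otimes n-1}\ket{+}$, which are operationally distinct, so \cref{def:psiepistemic} is satisfied. The bulk of the work is correctness of the statistics, which I would prove by induction on protocol length while maintaining the invariant that, conditioned on the record of observed outcomes, the ontic distribution is exactly uniform on $\text{supp}(\mu_\psi)$ for the state $\psi$ prescribed by the preparation and that record.

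The base case is the preparation, uniform on $\text{supp}(\mu_\psi)$ by definition, and the response-function step reproduces \cref{lemma:StabStats} exactly as in the prepare--measure model of the previous section: since every $\gamma\cong\psi$ agrees on $\mathcal{B}(\mathcal{S}(\psi))$ but is unconstrained off it, reading $\gamma(M)$ against the uniform measure gives a deterministic $\pm1$ when $\pm M\in\mathcal{S}(\psi)$ and a fair coin when $\pm M\notin\mathcal{S}(\psi)$. Clifford steps are routine: $\Gamma_C$ is a deterministic bijection of $\Lambda$ sending $\boldsymbol{G}\subset\widetilde{\mathcal{S}}(\psi)$ to $C\boldsymbol{G}C^\dagger\subset\widetilde{\mathcal{S}}(C\psi)$, and the relabelling $\gamma^R+\gamma_c$ tracks the stabilizer relations $CP_aC^\dagger=(-1)^{\gamma_c(a)}P_{c(a)}$, so $\gamma\cong\psi$ if and only if $\gamma^R+\gamma_c\cong C\psi$; a bijection preserves uniformity.

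The substantive step is the measurement update $\Gamma_{k|M}$ with $k=\gamma(M)$, which I would split into correctness of the image and uniformity. For the image, the subgroup part is direct: $\langle\boldsymbol{G}_M,M\rangle$ is an $(n-1)$-generator abelian subgroup of $\widetilde{\mathcal{S}}(\psi')$ because $\boldsymbol{G}_M$ is an $(n-2)$-generator $M$-commuting subgroup of $\mathcal{S}(\psi)$ and $(-1)^kM$ stabilizes $\psi'$. For the phase part I would invoke the stabilizer update formula already derived in the preliminaries: for $S\in\mathcal{S}(\psi)$ commuting with $M$, both $(-1)^{p_S}P_S$ and its product $(-1)^{p_S}P_S\,(-1)^kP_M=(-1)^{p_S+k+\beta(S,M)}P_{S+M}$ stabilize $\psi'$, so the phase of $SM$ in $\mathcal{S}(\psi')$ is $p_S+k+\beta(S,M)$, exactly the value assigned by $\gamma'(SM)$ in \cref{eqn:nqUpdate}; combined with $\gamma'(S)=\gamma(S)$ on the retained directions this yields $\gamma'\cong\psi'$. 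A small preliminary here is to check that the deterministic clause is single-valued on each pair $\{S,SM\}$ of $M$-commuting labels, i.e. that it is read with a fixed choice of coset representatives.

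The main obstacle is the second half: proving the output measure is uniform over the whole of $\text{supp}(\mu_{\psi'})$, not merely supported on it. This rests on matching the two randomization channels to the change of support. I would show that the random choice of $\boldsymbol{G}_M$ pushes the subgroup marginal onto the uniform distribution over $\boldsymbol{\mathcal{P}}_{n-1}\cap\widetilde{\mathcal{S}}(\psi')$ --- forced in the anticommuting case, and tiling evenly after adjoining $M$ in the commuting case --- and, in parallel, that the phase-randomization clause resamples exactly the labels $S$ with $[S,M]\neq0$ and $[S,\boldsymbol{G}]=0$. The key point is that every resampled label $S$ anticommutes with $M$, while every element of $\mathcal{S}(\psi')$ commutes with $(-1)^kM\in\mathcal{S}(\psi')$; hence all resampled labels are free for $\psi'$, so resetting them to the uniform value is correct, this set contains exactly those $\psi$-pinned labels dropped from the stabilizer group by the measurement, and the deterministic clause pins exactly the labels newly constrained by $\psi'$. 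The crux is then a counting argument showing that the number of conditioned input states mapping to a given target $(\boldsymbol{G}',\gamma')$, weighted by the branch probabilities, is target-independent; it is precisely the requirement that $\boldsymbol{G}$ carry $n-1$ generators that makes this commutation bookkeeping close up. Once uniformity on the new support is re-established the invariant propagates, the induction closes, and convex mixtures of pure stabilizer preparations follow by linearity.
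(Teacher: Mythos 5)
Your handling of the $\psi$-epistemic claim, the single-shot statistics, the Clifford step, and the consistency of the updated phase function (the ``image'' half of your measurement analysis) all line up with the paper's proof, which likewise reduces everything to showing that $\Gamma_{k|M}$ maps $(\boldsymbol{G},\gamma)\in\text{supp}(\mu_\psi)$ into $\text{supp}(\mu_{\psi^\prime})$ with appropriate randomization. The genuine gap is the invariant on which you hang the induction: that, conditioned on the outcome record, the ontic distribution is \emph{exactly uniform} on $\text{supp}(\mu_{\psi^\prime})$. That invariant is false for this model, so the ``crux'' counting argument you defer to cannot be carried out. By the first line of \cref{eqn:nqUpdate}, every output subgroup has the form $\boldsymbol{G}^\prime = \left<\boldsymbol{G}_M,M\right>$, hence always contains $P_M$; subgroups of $\widetilde{\mathcal{S}}(\psi^\prime)$ in $\boldsymbol{\mathcal{P}}_{n-1}$ not containing $P_M$ exist and receive zero mass. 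Already for $n=2$: measuring $Z\mathbb{I}$ on $\ket{00}$ leaves the quantum state unchanged, yet afterwards the subgroup part of the ontic state is $\set{\mathbb{I},Z\mathbb{I}}$ with certainty, never $\set{\mathbb{I},\mathbb{I}Z}$ or $\set{\mathbb{I},ZZ}$, whereas $\mu_{\ket{00}}$ weights all three equally. The phase part fails uniformity too: in the anticommuting case the update forces $\gamma^\prime(S)+\gamma^\prime(SM) = k + \beta(S,M)$ for every $S$ with $[S,M]=[S,\boldsymbol{G}]=0$, a deterministic correlation between two labels that may both lie outside $\pm\mathcal{S}(\psi^\prime)$ (e.g.\ $XZ$ and $ZX$ after measuring $YY$ on $\ket{00}$), while under the uniform measure on $\set{\gamma^\prime \, | \, \gamma^\prime \cong \psi^\prime}$ those values are independent fair coins. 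So targets with $P_M\notin\boldsymbol{G}^\prime$, or without these correlations, are never reached, and no weighting of branch probabilities fixes that.

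The repair is to weaken the invariant to what the model actually preserves, which is what the paper's case analysis propagates: (i) $\boldsymbol{G}\subset\widetilde{\mathcal{S}}(\psi)$ and $\gamma\cong\psi$ for the current state $\psi$, and (ii) off-group phase values are \emph{marginally} uniform, with their mutual correlations being precisely those the update installs --- engineered so that whichever of the mutually anticommuting extension classes $S_1,S_2,S_3$ a future measurement realizes, the phases assigned to the newly created group elements are correct (this is the role of the paper's ``equivalence classes of $S$'' argument, and of your own observation that the resampled labels are exactly those freed by the measurement). Since response functions read only $\gamma$, and the update conditionals depend on $\boldsymbol{G}$ only through commutation relations, this weaker invariant suffices to reproduce the statistics of arbitrary sequences and does propagate; exact uniformity is neither true nor needed. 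The failure of uniformity is not an artifact of sloppiness in the model: it is the same phenomenon as the preparation contextuality discussed in the paper's appendix --- preparing $\psi^\prime$ directly and preparing it by measuring down from $\psi$ are represented by genuinely different distributions over $\text{supp}(\mu_{\psi^\prime})$.
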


\begin{proof}

The model is $\psi$-epistemic by the example given above. It also reproduces the single shot statistics, by the argument given for the prepare-measure-discard model. Therefore we just to check that the measurement update rule maps the phase function to a new phase function that is consistent with the post-measurement state.

The first line corresponds to updating $\boldsymbol{G}$ to a new subgroup that is an proper subgroup of the post-measurement state. As there are many possible proper subgroups of $\boldsymbol{G}$ that commute with $M$ if $[M,G_i]=0, \, \forall G_i\in \boldsymbol{G}$ we sample one of them uniformly. If $M$ does not commute with all elements of $\boldsymbol{G}$ there is only one such $\boldsymbol{G}_M$. $\left<\boldsymbol{G}_M,M\right>$ is an proper subgroup of the post-measurement group and therefore $(\boldsymbol{G}^\prime,\cdot) \in \text{supp}\mu_{\rho^\prime}$

We will prove correctness by running through each of the possible measurement updates that need to be applied, which are categorized by the commutation relation between $\boldsymbol{G}$ and $M$. 

The first case we consider is $[M,\boldsymbol{G}]\neq 0$, i.e.  $M$ does not commute with $2^{n-2}$ elements of $\boldsymbol{G}$. We can write the post-measurement group as $\left<\boldsymbol{G}_ M,M,S \right>$, where $S$ is the unknown stabilizer element, such that $[S,M]=[S,\boldsymbol{G}]=0, \, S\notin \boldsymbol{G}$ and $\boldsymbol{G}_ M$ is the subgroup of $\boldsymbol{G}$ containing all elements that commute with $M$. There are $3$ non-trivial equivalence classes of $S$, which mutually anti-commute. Any $S \in \ProjP{n}$ that commutes with both $M$ and $\boldsymbol{G}$ could have been elements of the pre-measurement and post-measurement group, implying their phase function should remain fixed through the measurement. We therefore update all possible new elements of the post-measurement group, which are $SM \notin \mathcal{S}(\rho)$ as $\exists G \in \boldsymbol{G}$ such that $[SM, G] \neq 0$. Given two possible extensions generated by $S_1$ and $S_2$ such that $[S_1,S_2]\neq 0$ we have $[S_1 M g, S_2 M g^\prime] \neq 0, \forall g, g^\prime \in \boldsymbol{G}_M$. This implies the post measurement equivalence classes do not intersect, i.e. $S_1 M g \notin \left<G_M, M, S_2\right>, \, \forall g \in G_M$. Therefore $SM$ is only an element of one possible group extension implying that $\gamma^\prime(SM)$ has the correct phase if $SM \in \mathcal{S}(\rho^\prime)$. Note that any $G \in \boldsymbol{G}$ that commutes with $M$ satisfies the conditional of lines 2 and 3. Finally any element that could have been part of the pre-measurement group, but not the post-measurement group is randomized, line 4.

The second case we need to consider is when $M$ commutes with all elements of $\boldsymbol{G}$. Again this can be broken down to two cases $M \notin \boldsymbol{G}$ and $M \in \boldsymbol{G}$. If $M$ commutes with all of $\boldsymbol{G}$ but is not an element of the group, then we know the post-measurement stabilizer group can be expressed as $\left<\boldsymbol{G},(-1)^k M \right>$. Therefore the new (or potentially old) elements of the group are $G_i M$ with phase $\gamma(G_i M) = \gamma(G_i) + k +\beta(G_i,M) = \gamma^\prime(G_i M)$. Note if these Pauli operators were elements of the stabilizer group prior to measurement their phase remains unchanged. If $M \in \boldsymbol{G}$ then the phase function should remain unchanged through the measurement, up to randomization on stabilizer measurements not in the stabilizer group. This is expressed in lines 3 through 5 of the update rule. Note line 5 will be satisfied by operators that commute with both $M$ and $\boldsymbol{G}$, but are not in $\boldsymbol{G}$, these are possible elements of the stabilizer group and therefore their phase function should remain fixed.

Therefore the update rule given above maps an ontic state $\lambda = \left( \boldsymbol{G}, \gamma \right) \in \text{supp}(\mu_\psi)$ to an ontic state $\lambda^\prime = \left(\boldsymbol{G}^\prime,\gamma^\prime \right) \in \text{supp}(\mu_{\psi^\prime})$, where $\rho_{\psi^\prime} \propto \frac{\mathbb{I} + (-1)^k M}{2} \rho_\psi \frac{\mathbb{I} + (-1)^k M}{2}$. Further the randomization procedure ensures a random value of the phase function for observables not in the post-measurement stabilizer group. Therefore the model correctly reproduces the stabilizer statistics.

\end{proof}

It is interesting to note that the additional stabilizer ontology we introduced has not been required prior to defining the update rules. However, we can actual use this ontology to streamline the model to define an always-$\psi$-epistemic model. We can achieve this by noticing that when $[\boldsymbol{G},M] \neq 0$ the outcome of the measurement should be random, and therefore we can reconstruct the response functions accordingly. An overview of this procedure is covered in the appendix.

The generalization of the model to $n$-qubits has required us to extend the stabilizer ontology to proper subgroups of maximal abelian subgroups of the Stabilizer operators. It is not clear whether this additional ontology is required or can be made more compact. However by reducing the size of the groups in the ontology leads to ambiguities in how the phase function should be updated during measurement. It is therefore a highly interesting open problem whether $n-1$ generators of a stabilizer state are truly required to correctly update an epistemic state in an arbitrary $\psi$-epistemic model of the $n$-qubit stabilizer formalism. If this is indeed the case, it would suggest \emph{almost} full knowledge of a stabilizer state\footnote{As $n$ generators uniquely specify the state.} is required to be encoded in the ontology to correctly reproduce the stabilizer subtheory\rq{}s statistics.

\subsection[]{The $n$-qubit model's relation to the single-qubit 8-state model}

\begin{corollary} The $n=1$ case of the model is equivalent to the $8$-state model.
\end{corollary}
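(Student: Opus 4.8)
The plan is to verify the equivalence componentwise, matching the ontic space, the preparations, the transformations, and the measurements together with their update rules, between the $n=1$ specialization of our model and the $8$-state model of \cite{Wallman2012c}.

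First I would collapse the stabilizer ontology. For $n=1$ the set $\boldsymbol{\mathcal{P}}_{n-1} = \boldsymbol{\mathcal{P}}_0$ contains only the trivial group $\set{\mathbb{I}}$, it being the unique subgroup of $\ProjP{1}$ of order $2^0 = 1$. Hence $\Lambda = \boldsymbol{\mathcal{P}}_0 \times \gamma_1 \cong \gamma_1$, and since $\gamma(\mathbb{I}) = 0$ is fixed a phase function is determined by the triple $(\gamma(X),\gamma(Y),\gamma(Z)) \in \mathbb{Z}_2^3$. Through $\nu(M) = (-1)^{\gamma(M)}$ these $2^3 = 8$ functions are in bijection with the value assignments $(\nu(X),\nu(Y),\nu(Z)) \in \set{+1,-1}^3$ that label the $8$ ontic states of the $8$-state model. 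This identification is the backbone of the argument; the remaining steps check that it intertwines the two models' maps.

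Next I would match the static data under this bijection. A single-qubit pure stabilizer state is stabilized by $(-1)^p P$ for a single $P \in \ProjP{1}\backslash\mathbb{I}$, so with $\boldsymbol{G}$ forced to be trivial the support condition reduces to $\gamma \cong \psi$, i.e. $\gamma(P) = p$ (\cref{def:Consistent}); the resulting uniform distribution over the four consistent phase functions is precisely the $8$-state model's epistemic state for $\ket{\psi}$. The response function reads out $\gamma(M)$ deterministically, and the Clifford map $\Gamma_C$ acts as the permutation-with-phase that $C$ induces on $\set{X,Y,Z}$; both coincide verbatim with the measurement and transformation rules of the $8$-state model.

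The crux, and what I expect to be the main obstacle, is the update rule. Specialising \cref{eqn:nqUpdate} to $n=1$, the triviality $\boldsymbol{G}=\set{\mathbb{I}}$ makes the group component inert (any post-measurement subgroup with $n-1=0$ generators is again $\set{\mathbb{I}}$) and forces $[M,\boldsymbol{G}]=0$ with $M\notin\boldsymbol{G}$ for every nontrivial $M$. Only the third through fifth lines then fire: line 3 gives $\gamma^\prime(M) = \gamma(\mathbb{I}) + k + \beta(\mathbb{I},M) = k$, line 4 uniformly resamples exactly the two Paulis anticommuting with $M$ (the condition $[S,G_i]=0$ being vacuous), and line 5 fixes the remainder. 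I would then confront this with the $8$-state model's rule and verify they agree: the measured observable is set to its outcome while the two conjugate Pauli values are randomised. The only genuinely model-specific point to confirm is that the randomisation sets coincide and that no residual correlation survives — which holds because, unlike the $n>1$ case, there is no nontrivial $\boldsymbol{G}$ left to tie the resampled values together — thereby completing the equivalence.
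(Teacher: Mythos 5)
Your proposal is correct and follows essentially the same route as the paper's own proof: collapsing the trivial $\boldsymbol{\mathcal{P}}_0$ ontology to identify $\Lambda$ with the eight phase functions, matching state supports, response functions, and Clifford permutations, and then specializing the update rule to the commuting case so that $\gamma^\prime(M)=k$ with the two anticommuting Paulis randomized, which is exactly the 8-state model's repreparation rule. The only cosmetic difference is that the paper illustrates the Clifford correspondence with the explicit Hadamard example while you argue it generically; nothing substantive separates the two arguments.
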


\begin{proof} To see this let us consider the ontology for a single qubit in the model is $\Lambda = \boldsymbol{\mathcal{P}}_0 \times \gamma_{n=1}$. As $ \boldsymbol{\mathcal{P}}_0 = \{\mathbb{I}\}$ and all stabilizer groups contain $\mathbb{I}$, the stabilizer group part of the ontology is trivial and we can remove it. The non-trivial ontology is therefore $\Lambda = \gamma_1$, which is identical to the $8$ possible value assignments over $X$, $Y$, and $Z$ in the 8-state model.

States are defined as having uniform support over all consistent phase functions. Writing the phase function as $\gamma = (0,x,y,z)$ we see that a stabilizer state $\rho = \frac{\mathbb{I} + (-1)^{s_P} P}{2}$ has support on all phase functions such that $\gamma(P) = s_P$. I.e. a stabilizer state has support on $4$ ontic states as in the 8-state model, and the supports in coincide. Therefore stabilizer states in the model presented here and the 8-state model have the same representation.

Clifford transformations are represented as permutations of the ontic space in both our model and the 8-state model. In the 8-state model the permutation maps representing a Clifford operations are derived from their stabilizer relations, i.e. $\Gamma_{8-\text{state}}(H): (x,y,z) \mapsto (z, y\oplus 1,x)$. This also holds in the model presented here, i.e. $\Gamma_{\text{model}} (H): (x,y,z) \mapsto (z,y,x) \oplus (0,1,0)$, where we have decomposed the map into $\gamma^R$ and $\gamma_c$ for clarity.

The measurement response functions are defined identically, with the response function essentially reading out the value of the phase function/value assignment. All that is left is to demonstrate is that the update rules are the same. In the single-qubit stabilizer formalism all measurements are rank-$1$ projectors. Therefore the measurement update map in the 8-state model can be simply stated as repreparing the eigenstate corresponding the eigenvalue measured. For example an $X$ measurement with a $+1$ outcome the measurement update rule prepares the distrubution $\mu_{\ket{+}}$. 

Returning to the $n=1$ case of the updates rules for the model, equation \ref{eqn:nqUpdate}. As all single qubit measurements commute with $\boldsymbol{\mathcal{P}}_0$ ontology we only consider the commuting case of the update rules. In this case we have $\gamma^\prime(G_i M) = \gamma(G_i) + k + \beta(G_i, M), \, \forall G_i \in \boldsymbol{G}$, which for a single qubit reduces to setting $\gamma^\prime(M) = k$. Finally we randomize all phase function elements that don't commute with $M$, which in the case of a single qubit Pauli observable is all other single qubit Pauli observables. This update effectively reprepares the eigenstate of the measured observable with eigenvalue $k$. Therefore the update rules are the same. This implies the $8$-state model and the $n=1$ case of the ontological model of the qubit stabilizer formalism presented here are the equivalent.
\end{proof}

\section{Discussion}

In this paper we have constructed a contextual $\psi$-epistemic model of the $n$-qubit stabilizer formalism. The model is constructed by considering that given a stabilizer state it is possible to reproduce the one-shot statistics of stabilizer measurements by uniform sampling over value assignments, which we term the phase function, that are consistent with a stabilizer state. Measurement update rules then ensure that this value assignment is mapped to a value assignment consistent with the post-measurement state. Therefore reproducing the statistics of the $n$-qubit stabilizer formalism. 

The core component of the model's presentation is the value assignment. We have constructed the model this way to investigate the structure of contextuality in the $n$-qubit stabilizer formalism. However, to correctly update the value assignment through a measurement we are need to encode $n-1$ generators of the stabilizer state in the ontology. Whether this additional ontology is required or can be reduced is an interesting open question. 

By moving away from an outcome deterministic model it is possible to use the presented model to construct an \emph{always}-$\psi$-epistemic model, covered in the appendix. Always-$\psi$-epistemic models have a strong resemblance to probabilistic classical models as a quantum state represents a true lack of knowledge of the physical state of a system. Therefore they present an interesting possibility of a categorizing the transition from classical to quantum theory. Further they share many properties with weak-simulations and could providing an intriguing formalism for investigate the resources required to simulate quantum systems, universal or otherwise.

\begin{acknowledgments}
We would like to thank Hammam Qassim, Juan Bermejo-Vega, and Joshua Ruebeck for useful discussions. This research was supported by the Government of Canada through CFREF Transformative Quantum Technologies program, NSERC Discovery program, and Industry Canada.
\end{acknowledgments}

\bibliography{UnderlyingOntology}

\appendix

\section{Proof of correctness of the 2-qubit ontological model}\label{sec:2qCorrectness}

Here we give the proof of lemma \ref{lemma:2qModel};

\begin{proof}

From the argument given for the prepare-measure-discard model we know that we will correctly reproduce the stabilizer outcome statistics if the phase function satisfies $\gamma(M)=p_b, \, \forall b \in \mathcal{B}(\mathcal{S}(\rho))$ and is random otherwise. The distributions representing preparations clearly satisfy this, therefore we just need to show the update rule maps $\gamma$ to a $\gamma^\prime$ such that $\gamma^\prime(M)=p_{b^\prime}, \, \forall b^\prime \in \mathcal{B}(\mathcal{S}(\rho^\prime))$ and random otherwise. Additionally by the stabilizer update rule we know that $(-1)^k M \in \mathcal{S}(\rho^\prime)$ so the update on the stabilizer operator portion of the ontology is in the support of the post-measurement state. 

To prove the phase function update map correct updates to a phase function consistent with the post-measurement state we will exhaustively prove each possibility, with each case defined via the commutation relation between $M$ and $P$.

For the first case, we let $[P,M] \neq 0$, i.e. $M$ was certainly not an element of the pre-measurement stabilizer group. We can always write this group as;
\begin{align*}\set{\mathbb{I},(-1)^{\gamma(P)}P, (-1)^{\gamma(S)}S , (-1)^{\gamma(S)+\gamma(P)}PS},\end{align*} where $[M,PS]=[M,P] \neq 0$ and $[S,M] =0$. Therefore the post measurement group is given by;
\begin{align*}\set{\mathbb{I}, (-1)^{\gamma(S)}S, (-1)^{k}M, (-1)^{\gamma(S) + k + \beta(S,M)}P_{S+M}}.\end{align*}
I.e. we need to keep $\gamma^\prime(S) = \gamma(S)$ static through the update and update the phase function on $P_{S+M}$ to $\gamma^\prime(SM) = \gamma(S) + k + \beta(S,M)$. This is captured by lines 2 and 3 of the update rule. However in the update rule we search over all such $S$, we do this because given a $P$ there are 3 possible non-trivial stabilizer groups $P$ is an element of, modulo phases. I.e. there are $3$ non-commuting Pauli operators, $\{S_1,S_2,S_3\}$ we could combine with $P$ to construct all possible maximal stabilizer groups containing $P$.  These groups only share the elements $\mathbb{I}$ and $P$, this can easily be verified by checking $PS_i$'s commutation relations with $S_{j\neq i}$ and $PS_{j\neq i}$, therefore by updating all possible $S$ we cover all possible pre-measurement stabilizer groups.  Further by the consistency of the phase function with the pre-measurement state we know if $S\in\mathcal{S}(\rho)$ then $\gamma(S)=p_S$, and if not $\gamma(S)$ is a random value. Therefore the phase function remains random on elements not in the group. Finally for any $S$ such that $[P,S]=0$ and $[M,S]\neq 0$ we randomize $\gamma(S)$ as $S$ could have been an element of the pre-measurement group, but is definitely not an element after. Therefore $\gamma^\prime$ is consistent with all possible post-measurement stabilizer groups.

The second case is that $[M,P] = 0$ and $M\neq P$. Regardless of what the pre-measurement group was we can actually directly construct the post-measurement group;
\begin{align*}
\set{\mathbb{I},(-1)^{\gamma(P)}P,(-1)^k M, (-1)^{\gamma(P) + k + \beta(P,M)} P_{P+M}}.
\end{align*} 
Therefore we only need to update $\gamma^\prime(PM) = \gamma(P) + k + \beta(P,M)$ and randomize all phase function elements that could have been elements of the pre-measurement group, but cannot be part of the post-measurement group, lines 4 and 5 of the update rule. Finally we have the case that $P = M$. This case is trivial and no update rule should be applied as pre-and-post measurement stabilizer groups are the same. However this case is rolled into the previous case in the update rule above, this can be seen by noting $\gamma^\prime(PM) = \gamma(MM) = \gamma(\mathbb{I}) =0$ and the randomizing step does not effect any potential stabilizer elements.

Therefore the update rule $\Gamma_{k|M}$ maps a phase function $\gamma$ to a set of phase functions $\set{\gamma^\prime_i}_i$ such that $\gamma^\prime_i(P_{b^\prime}) = p_{b^\prime}, \, \forall i, \, \forall b \in \mathcal{B}(\mathcal{S}(\rho^\prime))$ and $\gamma^\prime_i(M)$ is uniformly random over $i$ if $\pm P_{b^\prime} \notin \mathcal{S}(\rho^\prime)$. And therefore reproduces the stabilizer statistics for sequential measurements.

\end{proof}

\section{Streamlining the model}

In the main body of this paper we have presented the ontological model of the stabilizer formalism in such a way that the phase function encodes the stabilizer statistics. This has been done for ease of presentation. However, we can reduce the size of the model by utilizing the stabilizer part of the ontology.

Recall the ontology of the model is defined to be $\Lambda_n = \boldsymbol{\mathcal{P}}_{n-1} \times \gamma_n$. This definition of the ontology of the model grows super-exponentially;
\begin{align*}
|\Lambda_n| = |\boldsymbol{\mathcal{P}}_{n-1}| \times |\gamma_n|= \left[2^{(n-1)(n-2)/2}\prod_{k=0}^{n-2} (4^n -1) \right] \times 2^{4^n},
\end{align*}
where we have used a similar argument as ~\cite{Aaronson2004} for the first term. Clearly $2^{4^n}$ dominates for all but the smallest $n$. Therefore the ontology grows super-exponentially with number of qubits, making it a poor candidate for a physically motivated ontological model. 

The super-exponential scaling derives from us allowing all possible value assignments the model. However if consider the stabilizer statistics, lemma \ref{lemma:StabStats}, we see that $\exists S \in \mathcal{S}(\rho)$ such that $[S,M] \neq 0$ then the outcome of a measurement of $M$ should be random. 

Therefore if we have an ontic state $\lambda = \left( \boldsymbol{G}, \gamma \right)$ then we know the outcome of any measurement $M$ will be random if $[\boldsymbol{G},M]\neq 0$. Further this randomness can be encoded in the response functions, rather than the phase function;
\begin{align*}
\xi_{k, M} (\lambda) = \begin{cases}
1 & \text{if} \, \, \gamma(M) = k, \, [\boldsymbol{G},M]=0 \\
\frac{1}{2} & \text{if} \, \, [\boldsymbol{G},M]\neq 0\\
0 & \text{otherwise.}
\end{cases}
\end{align*}
Therefore we do not need to store the value of the phase function if a Pauli operator does not commute with $\boldsymbol{G}$.

Further as we know the phase function must be consistent with a stabilizer state we can infer that we only need to store the phase on a set of generators of $\boldsymbol{G}$, due to the group structure of the stabilizer groups. I.e. it must be the case that $\gamma(g_1 + g_2) = \gamma(g_1) + \gamma(g_2) + \beta(g_1,g_2), \, \forall g_1, g_2 \in \boldsymbol{G}$. 

Finally as $\boldsymbol{G} \in \boldsymbol{\mathcal{P}}_{n-1}$ there are only three possible equivalence classes of generators we can add to $\boldsymbol{G}$ to construct a pure stabilizer state's stabilizer group. Denote these possible extensions as $\mathcal{S}(\psi_1) = \left<\boldsymbol{G},S_1 \right>$, $\mathcal{S}(\psi_2) = \left<\boldsymbol{G},S_2 \right>$, and $\mathcal{S}(\psi_3) = \left<\boldsymbol{G},S_3 \right>$, where $\mathcal{S}(\psi_1) \neq \mathcal{S}(\psi_2) \neq \mathcal{S}(\psi_3)$. As the phase function must be consistent with one of these three group extensions, we only need to store the phase of the relevant generator, $\gamma(S_i)$. However if we wish not to be able to infer the quantum state from the ontic state we should treat each extension equivalently. So we demand the phase function also satisfies $\gamma(S_i + g) = \gamma(S_i) + \gamma(g) +\beta(S_i, g), \forall S_i$, which is satisfiable via the non-intersecting property of $\set{\mathcal{S}(\psi_i)\backslash \boldsymbol{G}}_i$.

Therefore can construct an \emph{symmetric-always}-$\psi$-epistemic model by encoding the ontic state in a tableau similar to the Gottesman-Aaronson strong-simulation ~\cite{Aaronson2004};

\begin{align}
\lambda \equiv 
\begin{array}{c|cccc | cccc | c}
\multirow{4}{*}{$\set{g_i}_i$}&a_{x_1,g_1}&a_{x_2,g_1}& \cdots & a_{x_n,g_1} & a_{z_1,g_1}&a_{z_2,g_1}& \cdots & a_{z_n,g_1} & \gamma(g_1) \\
&a_{x_1,g_2}&a_{x_2,g_2}& \cdots & a_{x_n,g_2} & a_{z_1,g_2}&a_{z_2,g_2}& \cdots & a_{z_n,g_2} & \gamma(g_2) \\
&\vdots&\vdots& \ddots & \vdots & \vdots&\vdots& \ddots & \vdots & \vdots \\
&a_{x_1,g_{n-1}}&a_{x_2,g_{n-1}}& \cdots & a_{x_n,g_{n-1}} & a_{z_1,g_{n-1}}&a_{z_2,g_{n-1}}& \cdots & a_{z_n,g_{n-1}} & \gamma(g_{n-1}) \\
\hline 
\multirow{3}{*}{$\set{S_i}_i$}&a_{x_1,S_1}&a_{x_2,S_1}& \cdots & a_{x_n,S_1} & a_{z_1,S_1}&a_{z_2,S_1}& \cdots & a_{z_n,S_1} & \gamma(S_1) \\
&a_{x_1,S_2}&a_{x_2,S_2}& \cdots & a_{x_n,S_2} & a_{z_1,S_2}&a_{z_2,S_2}& \cdots & a_{z_n,S_2} & \gamma(S_2) \\
&a_{x_1,S_3}&a_{x_2,S_3}& \cdots & a_{x_n,S_3} & a_{z_1,S_3}&a_{z_2,S_3}& \cdots & a_{z_n,S_3} & \gamma(S_3) \\
\end{array}
\end{align}
Where the binary-sympletic representation has been used to encode Pauli operators.
From this we can see that an ontic state can be stored in $(2n+1)(n+2)$ bits, which satisfies the bound proven by Karanjai \emph{et al.} ~\cite{Karanjai2018}. The ontology is symmetric as each ontic state is in the support of exactly $3$ quantum states, given by the stabilizer groups $\mathcal{S}(\psi_1) , \, \mathcal{S}(\psi_2), \, \text{and} \, \mathcal{S}(\psi_3)$.

This construction drastically reduces the size of the ontology of the model. It also provides a route to investigating whether such a model can be used as the basis for a weak-simulation scheme, which will be covered in a follow up paper.

\section[]{Generalized Contextuality}

Generalized contextuality is an extension of traditional contextuality to preparations, transformations, and probabilistic ontological models. This generalization is defined via the concept of operational equivalences. We say that two (experimental) physical operations are \emph{operationally equivalent}, denoted $\cong$, if and only if they give the same experimental statistic, regardless of the choice of experimental procedure:
\begin{tabular}{p{2.65cm}p{12cm}}
\emph{Preparations:} & Two preparation procedures $P$ and $P^\prime$ are operationally equivalent, $(P\cong P^\prime)$, \emph{iff} $\text{Pr}(k|P,T,M) = \text{Pr}(k|P^\prime,T,M), \, \forall T,M$. \\
\emph{Transformations:} & Two transformation procedures $T$ and $T^\prime$ are operationally equivalent, $(T\cong T^\prime)$, \emph{iff} $\text{Pr}(k|P,T,M) = \text{Pr}(k|P,T^\prime,M), \, \forall P,M$.  \\
\emph{Measurements:} & The outcome $k$ of two measurement procedures $k\in M$ and $k \in M^\prime$ is operationally equivalent, $([k,M]\cong [k,M^\prime])$, \emph{iff} $\text{Pr}(k|P,T,M) = \text{Pr}(k|P,T,M^\prime), \, \forall P,T$.
\end{tabular}
\\

The assumption of generalized non-contextuality then states that any two operationally equivalent procedures in our physical theory should be represented by the same object in an ontological model. Therefore we say an ontological model is \emph{preparation non-contextual} (PNC) if;
\begin{align}
\mu_P = \mu_{P^\prime} \, \Leftrightarrow \, P \cong P^\prime.
\end{align}
Similarly it is \emph{transformation non-contextual} (TNC) if;
\begin{align}
\Gamma_T = \Gamma_{T^\prime} \, \Leftrightarrow \, T \cong T^\prime.
\end{align}
And \emph{measurement non-contextual} (MNC) if;
\begin{align}
\xi_{k,M} =\xi_{k,M^\prime} \Leftrightarrow \, [k,M] \cong [k,M^\prime].
\end{align}
As previously mentioned the $n$-qubit stabilizer formalism exhibits all forms of contextuality. Therefore it is not surprising that the model presented in this paper does not satisfy \emph{any} of these requirements. Hence it is a preparation-transformation-measurement contextual ontological model.

\subsection{The model and generalized contextuality}

\subsubsection[]{Preparation Contextuality}

The model is preparation contextual for $n>1$. To see this we use what we call \emph{Pauli} eigenbases. A Pauli eigenbasis $\set{\psi_i}_i$ is a set of stabilizer states such that we can write the stabilizer groups of all eigenstates as $\mathcal{S}(\psi_i) =  \left<\pm G_1, \pm G_2, ..., \pm G_n \right>$, where $\{G_i \in \ProjP{n}\}$ are the same for all eigenstates. We call these Pauli eigenbases as they are exactly the joint eigenstates of sets of commuting Pauli operators. However it should be noted that these are not the only orthonormal bases we can construct in the stabilizer formalism. The PBR POVM elements give a clean example of an orthonormal basis that does not have the structure of a Pauli eigenbasis\footnote{To construct these bases you leverage the fact to stabilizer states $\rho_1$ and $\rho_2$ are orthogonal \emph{iff} $P \in \mathcal{S}(\rho_1)$ and $-P \in \mathcal{S}(\rho_2)$.}.

So without loss of generality let us consider the two Pauli eigenbases;
\begin{align*}
\set{\psi_{Z,j}} &\cong \set{\mathcal{S}(\psi_{Z,j})=\left<(-1)^{f(j,i)} Z_i \right>_{i=1,..,n}}, \\
\set{\psi_{X,k}} &\cong \set{\mathcal{S}(\psi_{X,k})=\left<(-1)^{f(k,i)} X_i \right>_{i=1,..,n}},
\end{align*}
where $Z_i = \mathbb{I}_{j\neq i}\otimes Z_i$ and similarly for $X_i$, and $(-1)^{f(i,j)}$ is a function that maps  index $j$ to all possible $\pm 1$ phases on each generator. These are the $Z$ (computational) and $X$ eigenbases respectively and clearly share no non-identity stabilizer element, i.e. $\nexists S,j,k$ such that $S \in \mathcal{S}(\psi_{Z,j})\backslash \mathbb{I}$ and $S \in \mathcal{S}(\psi_{X,k})\backslash \mathbb{I}$. Therefore by noting the definition of the support of a stabilizer state in the model we can infer that $\mu_{\psi_{Z,j}} \cap \mu_{\psi_{X,k}} = \emptyset, \,\forall j,k$.

To demonstrate that the model is preparation contextual consider that both eigenbases can be used to prepare the maximally mixed state; 
\begin{align*}\frac{1}{2^n}\mathbb{I} = \sum_j \frac{1}{2^n} \ketbra{\psi_{Z,j}}{\psi_{Z,j}} = \sum_k \frac{1}{2^n} \ketbra{\psi_{X,k}}{\psi_{X,k}} =\frac{1}{2^n}\mathbb{I}.\end{align*}
Therefore the representation of the maximally mixed state depends on which eigenbasis we perform the preparation in;
\begin{align*} \mu_{\mathbb{I}/2^n}^{(Z)} = \sum_j \frac{1}{2^n} \mu_{\psi_{Z,j}} \neq \sum_k \frac{1}{2^n} \mu_{\psi_{X,k}} = \mu_{\mathbb{I}/2^n}^{(X)},\end{align*}
by the disjointness of the pure state preparations. Therefore the model is preparation contextual. Strangely, this implies $\text{supp}(\mu_{\mathbb{I}/2^n}^{(Z)}) \cap \text{supp}(\mu_{\mathbb{I}/2^n}^{(X)}) = \emptyset$. However, we can define a \emph{cannonical} representation of the maximally mixed state via a uniform distribution over all possible representations.

The above analysis does not apply in the $n=1$ case as the stabilizer part of the ontology is trivial, i.e. for a single qubit $\Lambda_{n=1} = \set{(\mathbb{I}, \gamma)| \gamma \in \gamma_{n=1}}$ and $\mathbb{I} \in \mathcal{S}(\rho), \, \forall \rho \in \mathcal{S}(\mathcal{H}_2)$. Therefore we can effective discard it, making the $n=1$ case preparation non-contextual as all bases span the space of phase functions, note this spanning nature of a basis also holds for $n>1$.

\subsubsection[]{Transformation Contextuality}

The model inherits the transformation contextuality of the single qubit stabilizer subtheory ~\cite{Lillystone2018}. This implies, via embedding, the model is transformation contextual for all $n$. 

Recently there has been proposals to strengthen the definition of transformtion contextuality to be more akin to traditional contextuality ~\cite{Mansfield2018}. This definition would class an operational theory as transformation contextual if the representation of a unitary transformation was dependent on the set of unitaries it was performed with. Under this definition the model is transformation non-contextual.

\subsubsection[]{Measurement Contextuality}

The model up to this point has been constructed such that only rank $2^{n-1}$ measurements are permissible, i.e. binary outcome measurements. This restriction has been imposed on the model as it directly corresponds to the measurements allowed within the stabilizer formalism. However any model of binary outcome measurements are trivially measurement non-contextual, as there is only one context a projector can be measured in.

Considering, a sequence of rank-$2^{n-1}$ PVMs can be used to construct rank-1 PVMs. For example we can construct a computational basis measurement for two-qubits by measuring $Z\mathbb{I}$, with outcome $k_1$, followed by $\mathbb{I}Z$, with outcome $k_2$, which will project any state onto $\ket{k_1 k_2}$ with the correct probability. Alternatively we could have measured $\mathbb{Z}I$ followed by $ZZ$, with outcome $k_3$, which will project onto the state $\ket{k_1 (k_3 \oplus k_1)}$. This gives two contexts in which the computational basis can be measured.

By using the operational elements of the model we can construct \emph{effective} response functions that correspond to measuring any rank $2^k, \, k \in \set{1,...,n-1}$ observable with stabilizer eigenspaces. Here we will present this construction for two sequential measurements. However, it can easily be extended to longer measurement sequences. We consider a sequence of two measurements $M_1 \rightarrow M_2$ such that $[M_1,M_2] =0$, this implies that both are diagonal in some basis. Letting the outcome of each measurement be $k_1$ and $k_2$, respectively, we can construct effective response function for the joint measurement via;
\begin{align}
\xi_{(M_1 \rightarrow M_2)} (k_1,k_2| \lambda) &= \text{Pr}(k_1,k_2|M_1,M_2,\lambda)=\text{Pr}(k_2| k_1, M_1, M_2, \lambda)\text{Pr}(k_1|M_1,\lambda), \nonumber\\
&= \int_\Lambda d\lambda^\prime \text{Pr}(k_2,\lambda^\prime| k_1, M_1, M_2, \lambda)\text{Pr}(k_1|M_1,\lambda), \nonumber\\
&= \int_\Lambda d\lambda^\prime \text{Pr}(k_2|M_2,\lambda^\prime)\text{Pr}(\lambda^\prime|k_1,M_1,\lambda)\text{Pr}(k_1|M_1,\lambda), \nonumber\\
&= \int_\Lambda d\lambda^\prime \xi_{M_2} (k_2 | \lambda^\prime) \Gamma_{k_1 | M_1} (\lambda^\prime, \lambda) \xi_{M_1}(k_1|\lambda), \label{eqn:EffRepFn}
\end{align}
where $\xi_{M_i}(k_i | \lambda)$ are the response functions for a binary outcome measurements and $\Gamma_{k|M}$ is the post-measurement transition map. Therefore the effective response functions are dependent on the sequence of measurements and choice of rank-$2^{n-1}$ measurements used. Implying the model is measurement contextual.

Alternatively, to see how the model must be measurement contextual consider that traditional non-contextuality is the conjunction of measurement non-contextuality and outcome determinism. Therefore as the model is outcome deterministic and traditionally contextual, it must be measurement contextual\footnote{A final possible way to derive measurement contextuality is to note a projector can be associated to an outcome in many different measurement procedures. For example, $\Pi_{00}=\ketbra{00}{00}$ is an a possible outcome of measuring each qubit in the computational basis, e.g. $M_1=Z\mathbb{I}$ and $M_2=\mathbb I Z$, and it is an outcome of a PBR-style measurement, e.g. $M_1=ZZ$ then adaptively measure $M_2$ according to; $\mathcal{O}(M_1) = 0 \longrightarrow M_2 = Z\mathbb{I}$, $\mathcal{O}(M_1) = 1 \longrightarrow M_2 = XX$.}.

\end{document}